\newtheorem{theorem}{Theorem}
\newtheorem{lemma}{Lemma}
\newtheorem{proposition}{Proposition}
\newtheorem{corollary}{Corollary}
\DeclareMathOperator{\ex}{\mathbb{E}}
\DeclareMathOperator{\pr}{\mathrm{Pr}}
\newcommand\diff{\mathrm{d}}
\newcommand\influ{\mu}
\newcommand\qhat{\hat{q}}
\newcommand\rhohat{\hat{\rho}}
\newcommand\influhat{\hat{\influ}}
\newcommand\Qhat{\hat{Q}}
\newcommand\qbar{\bar{q}}
\newcommand\Scal{\mathcal{S}}
\newcommand\dbar{\bar{d}}
\newcommand\alphabar{\bar{\alpha}}
\newcommand\alphaul{\underline{\alpha}}
\title{Influence Prediction for Continuous-Time Information Propagation on Networks}
\author{%Shui-Nee Chow \and Xiaojing Ye \and Hongyuan Zha \and Haomin Zhou}
Shui-Nee Chow\thanks{School of Mathematics, Georgia Institute of Technology, Atlanta, GA
	(\texttt{chow@math.gatech.edu}).}
	\and
	Xiaojing Ye\thanks{Department of Mathematics \& Statistics, Georgia State University, Atlanta, GA (\texttt{xye@gsu.edu}).}
	\and
	Hongyuan Zha\thanks{School of Computational Science \& Engineering, College of Computing, Georgia Institute of Technology, Atlanta, GA (\texttt{zha@cc.gatech.edu}).}
	\and
	Haomin Zhou\thanks{School of Mathematics, Georgia Institute of Technology, Atlanta, GA
	(\texttt{hmzhou@math.gatech.edu}).}
}
\date{}
\begin{document}
\maketitle

% REQUIRED
\begin{abstract}
We consider the problem of predicting the time evolution of influence,
defined by the expected number of activated (infected) nodes, given
a set of initially activated nodes on a propagation network. 
To address the significant computational challenges of this problem on large heterogeneous networks, 
we establish a system of differential equations governing the 
dynamics of probability mass functions on the state graph where each node
lumps a number of activation states of the network, which can be considered
as an analogue to the Fokker-Planck
equation in continuous space. We provides several methods to estimate the
system parameters which depend on the identities of the initially active nodes,
network topology, and activation rates etc.
The influence is then estimated by the solution
of such a system of differential equations.
Dependency of prediction error on parameter estimation is established.  
This approach gives rise to a class of novel and scalable algorithms that work 
effectively for large-scale and dense networks.
Numerical results are provided to show the very promising performance
in terms of prediction accuracy and computational efficiency of
this approach.
\smallskip

\noindent \textbf{Keywords.}
Propagation network, continuous time, information propagation, 
Markov process, Fokker-Planck equations, influence prediction

\end{abstract}

% REQUIRED
%\begin{keywords}
%Propagation network, continuous time, information propagation, 
%Markov process, Fokker-Planck equations, influence prediction
%\end{keywords}

% REQUIRED
%\begin{AMS}
%  65C40, 65Y10, 68U35
%\end{AMS}

\section{Introduction}
\label{sec:intro}

Viral signal propagation on large heterogeneous networks is an emerging research subject
of both theoretical and practical importance.
Influence prediction is one of the most fundamental problems
about propagation on networks, and it has been arising from many
real-world applications of significant societal impact, such as news spread
on social media, viral marketing, computer malware detection, 
and epidemics on heterogeneous networks. For instance,
when considering a social network formed by people such as that of 
Facebook or Twitter, the viral signal can be a tweet or a trendy topic being 
retweeted by users (nodes) on the network formed by their followee-follower relationships.
We call a user activated if he/she participates to tweet, and the followers
of this user get activated if they retweet his/her tweet later, thus
the activation process gradually progresses (propagates) and the tweet spreads out.
A viral signal can also be a new electronic gadget that 
finds wide-spread adoption in the user population through a 
word-of-mouth viral marketing process \cite{Leskovec:2009a,Leskovec:2007b,Ryu:2012a},
and a user is called activated when he/she adopts this new gadget. 
Influence prediction is to quantitatively estimate how influence, 
defined by the expected number of activated nodes, evolves over time
during the propagation when a specific set (called source set) of nodes are initially activated.

Influence prediction is also the most critical step in solving problems 
arising from many important downstream applications such as 
influence maximization \cite{Cohen:2014a,Gomez-Rodriguez:2012c,Kempe:2003a,Wang:2012a}
and outbreak detection \cite{Cui:2013a,Leskovec:2007b}.
For instance, in influence maximization, the goal is to select the source node set
of a given size from the propagation network such that its influence is maximized at a prescribed time. 
Obviously, influence prediction serves as the most fundamental subroutine 
in the computation, and the quality of influence maximization heavily depends on the
accuracy of influence prediction.

\subsection{Problem description}\label{subsec:problem_description}
The influence prediction problem can be formulated as follows.
Let $G=(V,E)$ be a given network (directed graph) with node (vertex) set 
$V=\{i:1\leq i \leq K\}$ and edge set $E\subset V\times V$. 
We denote $N_i^{\mathrm{in}}:=\{j:(j,i)\in E\}$
and $N_i^{\mathrm{out}}:=\{j:(i,j)\in E\}$.
A piece of information on $G$ can spread or propagate 
from an active node $i$ to every inactive $j\in N_i^{\mathrm{out}}$, and once succeeded, 
node $j$ becomes active and starts to propagate information to
inactive nodes in $N_j^{\mathrm{out}}$, and so on.
Once $i$ is activated, the time elapsed for $i$ to activate $j$
follows certain probability. 
In addition to node-to-node activations, the nodes may also have the ability of
self-activation. Namely, an inactive node $i$ 
can be self-activated regardless of having any activated nodes in $N_i^{\mathrm{in}}$. 
We assume the standard case that activated nodes cannot be activated again
unless recovery scenario is considered.
At any time, each node is in one of two states: inactive (susceptible) or
active (infected). This model is called susceptible-infected (SI) 
model in classical mathematical epidemics theory.
However, unlike most of existing works in this field, we here focus on efficient 
computational method for influence prediction in the following settings
due to practical concerns in real-world social networking applications.
\begin{itemize}[leftmargin=7mm]
\item The network $G$ is deterministically heterogeneous.
This is significantly different from the case of classical SI model 
in mathematical biology/epidemics theory
which does not consider contact network at individual level.
Our network is also different from heterogeneous but statistically homogeneous
networks considered in statistical physics literature,
where nodes can be partitioned into multiple categories according to certain properties, e.g.,
degrees, and the nodes within each category can be treated equivalently. 
In our case, the edges are explicitly given in the static network
and the activation times have different but fixed distributions.

\item Quantitative estimate of influence for time $t$ before equilibrium.
Note the equilibrium state of SI model on network is trivial:
all nodes that can be reached from the source set will be infected
as time tending to infinity. However, practical interests often
lie in influence before equilibrium. For example, merchants would like
to know how many people will be influenced by commercial advertisement
within one month rather than three years later. In this case, we need to 
estimate the time evolution of influence in early to middle stage
where the propagation is still in nonequilibrium state.
\end{itemize}
Our discussion also includes the case of self-activation where
the unactivated nodes can activate themselves automatically.
If infected nodes can recover, become susceptible and prune to future infection,
then the model is called susceptible-infected-susceptible (SIS), for which we only
provide a brief discussion near the end of this paper.

Given network $G=(V,E)$, the stochastic propagation process is determined
by the distribution of the activation times between nodes. 
In this paper, we mainly consider the model with activation times exponentially
distributed which is widely
used in classical epidemic study and social network.
In this model, the time for a just activated node
$i$ to activate each unactivated $j$ in $N_i^{\mathrm{out}}$, denoted
by $t_{i,j}$, follows $\exp(\alpha_{ij})$ distribution
(here
$t$ follows $\exp(\alpha)$ distribution if the probability density function of $t$ is
$p_t(\tau)=\alpha e^{-\alpha \tau}$ for $\tau\geq0$),
and is independent 
of any other $t_{i',j'}$ where $i\neq i'$ and/or $j\neq j'$. 
Here $\alpha_{ij}>0$ indicates the instantaneous activation rate of $j$ by
$i$. If $(i,j) \notin E$, we set $\alpha_{ij}=0$ by convention.
Note that exponential random variable following $\exp(\alpha)$ has mean
$1/\alpha$, therefore, the larger $\alpha_{ij}$ is, the faster $i$ can
activate $j$ on expectation. Hence $\alpha_{ij}$ can be interpreted
as the impact level (weight) of $i$ on $j$.
Similarly, the time for an inactive node $i$ to get self activated
follows $\exp(\beta_i)$ for some $\beta_i>0$. When recovery scenario is considered, an activated
node $i$ can recover in time following $\exp(\gamma_i)$ for some $\gamma_i>0$
since it is activated.

The continuous-time propagation model with heterogeneous activation rates
appears suitable for a great number of real-world applications and 
has been advocated by many recent works
\cite{Du:2013a,Gomez-Rodriguez:2012c,Pastor-Satorras:2015a,
Van-Mieghem:2013a,Van-Mieghem:2009a}.
In addition, this model yields a time-homogeneous Markov 
propagation process so that numerical simulations can be implemented
in a straightforward manner and some theoretical analysis of the
algorithm can be carried out.
Therefore, we focus on the development of the algorithm on this
propagation model, and evaluate the performance numerically to obtain
references worthy of trust through a large amount of Monte Carlo simulations.
However, the general framework using Fokker-Planck equations - the main strategy in the current work -
as well as the error estimations developed in Section \ref{subsec:error}
apply to any propagation models (e.g., activation time not exponentially
distributed, such as Hawkes processes) on networks.

To estimate time evolution of influence of a
source set $S$, we define a single stochastic process $N(t;S)$ as the number of 
activated nodes at time $t$ when the source set is $S$.
Then we directly compute the probability distribution of $N(t;S)$:
\begin{equation}\label{eq:rho}
\rho_k(t;S):=\pr (N(t;S)=k),\quad \text{for}\ k=0,1,\dots,K.
\end{equation}
The influence, defined by the expected number of activated nodes 
at time $t$, can therefore be calculated easily by
\begin{equation}\label{eq:influ}
\influ(t;S) = \ex [N(t;S)]=\sum_{k=0}^K k\rho_k(t;S),
\end{equation}
where $K:=|V|$ is the size of the network.

The main focus of this paper is to establish a general framework for 
computing (predicting) influence $\influ(t;S)$
based on \eqref{eq:rho} and \eqref{eq:influ} for any given source set $S$.
More precisely, we build the system of equations for the time evolution of
$\{\rho_k(t;S): 0\leq k\leq K\}$, analyze its properties, 
estimate the parameters in the equations,
and solve for all $\rho_k(t;S)$ to predict the influence $\influ(t;S)$ in \eqref{eq:influ}
for all $t$. Since the source $S$ is arbitrarily set in advance,
we drop the symbol $S$ in the derivation hereafter for notation simplicity.

The idea of deriving evolution equations of
$\{\rho_k(t): 0\leq k\leq K\}$ is closely related to the theory of Fokker-Planck
equation. In continuous space $\mathbb{R}^n$, consider a classical stochastic process
$X(t)$ that stands for the location of a particle at time $t$.
Let $\rho(x,t)$ denote the
probability density that $X(t)$ is located at $x\in \mathbb{R}^n$ at time $t$,
then $\rho(x,t)$ evolves over time with a constraint $\int_{\mathbb{R}^n}\rho(x,t)\diff x=1$ at every $t$.
The Fokker-Planck equation, also known as the forward Kolmogorov equation, 
is a deterministic partial differential equation governing the 
time evolution of $\rho(x,t)$.
For example, if $X(t)$ moves according to a 
stochastic differential equation (SDE)
$\diff X(t) = - \nabla \Psi(X(t)) \diff t + \sqrt{2\beta} \diff W(t)$
where $W(t)$ is an $n$-dimensional Brownian motion and
$\Psi(\cdot)$ is a scalar-valued potential function,
then the Fokker-Planck equation of $\rho(x,t)$ is
$\partial_t \rho(x,t) = \nabla \cdot (\nabla \Psi(x){ \rho}(x,t)) + \beta \Delta { \rho}(x,t)$. 
Here $\Delta { \rho}(x,t)$ corresponds to the $W(t)$ term 
in the SDE and is called the diffusion term,
and $\nabla \cdot (\nabla \Psi(x){ \rho}(x,t))$ is the drift term.
Note that the statistics of $X(t)$ can be completely determined by
the solution $\rho(x,t)$ of the Fokker-Planck equation.

Likewise, the probabilities $\{\rho_k(t):0\leq k\leq K\}$ in our approach also evolve over time
with $\sum_{k=0}^K \rho_k(t)=1$ at all $t$.
The time evolution of $\rho_k(t)$ is also governed by certain
Fokker-Planck equation which is now a system of deterministic ordinary differential equations
since the state space is discrete $N(t)=0,1,\dots,K$ rather than continuous $\mathbb{R}^n$.
In recent years, there have been growing research interests in
general graph-based Fokker-Planck equations 
to study problems related to optimal transport on finite graphs \cite{Che:2016a,Erbar:2012a,Chow:2011b}.
In the present paper, however, our goal is to find the Fokker-Planck equation
that governs $\rho_k(t)$ in \eqref{eq:rho}, and solve for these $\rho_k(t)$ to
obtain influence $\influ(t)$ using \eqref{eq:influ}.
We also analyze how the coefficient errors in Fokker-Planck equation affect 
accuracy of predicting $\influ(t)$ using this approach.

\subsection{Related Work}
\label{sec:related}

Previous study of influence estimation on networks is mainly restricted to statistically 
homogeneous and well-mixed populations, particularly in the context of statistical properties of dynamical
processes on complex networks in physics. 
A comprehensive survey is provided in \cite{Pastor-Satorras:2015a}.
The typical approach is based on mean-field approximation (MFA) 
to establish a system of differential equations for the compartment
model which groups nodes with statistically identical properties into one. 
%Then moment closure \cite{Miller:2014a} is applied
%to approximate the joint distributions of activation states of adjacent nodes
%by ignoring correlations between them, and thus close the system 
%\cite{Newman:2010a,Pastor-Satorras:2015a,Porter:2014a}. 
For example, degree-based MFA groups nodes of the same degree which are considered to have 
identical behavior statistically, 
and hence can significantly reduce the size of the system \cite{Boguna:2002a}.
Pair approximation includes the joint distribution in the system of
equations, which essentially applies moment closure after the joint
distribution of paired nodes, and is shown to have improved accuracy over 
standard MFA \cite{Boccaletti:2014a,Demirel:2014a,Newman:2010a}.
%Moreover, Gleeson derived the approximate master equations
%for general stochastic binary-state dynamics which significantly improves
%the accuracy over degree-based MFA and pair approximation \cite{Gleeson:2011a,Gleeson:2013b}.
Other generalization and improvements of MFA and pair approximation
using compartment models and motif expansions
can be found in \cite{Demirel:2014a,Lindquist:2011a,Marceau:2010a,Miller:2014a,Taylor:2014a},
and references therein.

As noted in Section \ref{subsec:problem_description}, our focus in this paper
is instead on influence prediction (estimation) on \textit{deterministically heterogeneous networks} 
particularly in \textit{non-equilibrium} stage, which is significantly different from
existing works including those mentioned above.
For influence prediction in this setting, prototype MFA for Markov propagation model developed 
in \cite{Kephart:1991a,Wang:2003a} is generalized to arbitrary network topology \cite{Van-Mieghem:2009a},
and then further extended to inhomogeneous activation and
recovery rate between nodes \cite{Van-Mieghem:2013a}. The model adopts the MFA and first-order
moment closure, i.e., substituting the
joint distribution of two activated neighbor nodes by product of marginal distributions 
for individual nodes, to retain a feasible size of the derived system of 
differential equations. A second- (or higher-) order moment closure can be considered 
but the limitations and instability are discussed in \cite{Cator:2012a}. 
Assuming absence of recovery, the 
exact solution is available due to the Markov property of propagation,
however, its computational complexity increases drastically in terms of
size and density of general networks \cite{Gomez-Rodriguez:2012c}.
As an alternative to solving for influence based on evolution equations,
methods based on sampling propagations (also called cascades) and statistical learning technique
are also developed, but often posing various requirements
on input data and output results. For instance, a scalable computational method based on 
learning the coverage function of each node based on sampling and kernel estimation is developed,
which can can only predict the influence at a prescribed time \cite{Du:2013a}. The work is further extended
to estimate the time-varying intensity of propagation using
similar coverage function idea \cite{Du:2014a}. 
%The dependency of
%solution accuracy on sampling complexity is also demonstrated.
Learning-based methods are usually companioned with a great amount
of accuracy analysis based on classical theory of sampling complexity.
However, the major problem with learning-based approaches is
in the use of large amount of samplings to ensemble the unknown
function or probability of interests but lack of
a comprehensive understanding of the underlying dynamics
and unique properties associated with the stochastic
propagation on networks. Moreover, learning-based methods can
have special assumptions on data which may not be realistic in real-world
applications. 
%For example, in disease spread on networks, it is often
%impossible to track down to the individuals who were the originals of infection
%and triggered the entire epidemic,
%however, the knowledge of their identities is required by the learning-based
%TCoverageLearner method \cite{Du:2014a}. Last but not least,
To achieve moderate accuracy level in large-scale and complex network,
learning-based methods require extensive amount of sampling/simulations
which causes significant computational burden and hinders their applicability
in real-world problems.

\section{Proposed Method}
In this section, we first derive the Fokker-Planck equation for the probabilities
$\rho_k(t)$ of $N(t)$ for the propagation model with exponentially distributed
activation times. We provide two effective methods to estimate the coefficients in the 
Fokker-Planck equation for large heterogeneous networks. 
Then we establish the relation between the estimation error of the
coefficients in the Fokker-Planck equation and the accuracy in 
the predicted influence
for general propagation models using our approach. 

\subsection{The Fokker-Planck equation of $\rho_k(t)$}
Let $G=(V,E)$ and $\{\alpha_{ij}:(i,j)\in E\}$
(and $\{\beta_i:i\in V\}$ for self-activation and $\{\gamma_i:i\in V\}$ for recovery) 
be given and the source set $S$ be chosen arbitrarily.
The number of activated nodes, $N(t)$, has $K+1$ states corresponding
to $N(t)=0,1,\dots,K$. 
Let $M_k$ denote the state that $N(t)=k$ nodes in $G$ are activated.
Then, for the general SIS model with self-activation and recovery, 
the transitions between states of $N(t)$ can be illustrated as follows,
\begin{equation}\label{eq:FPEstruct}
\boxed{M_0} % \underset{r_1}{\stackrel{q_0}{\rightleftharpoons}} M_1
\rightleftharpoons \dots \rightleftharpoons
\boxed{M_{k-1}} \underset{r_k(t)}{\stackrel{q_{k-1}(t)}{\rightleftharpoons}} \boxed{M_k}
\underset{r_{k+1}(t)}{\stackrel{q_k(t)}{\rightleftharpoons}} \boxed{M_{k+1}}
\rightleftharpoons \dots \rightleftharpoons \boxed{M_{K}}
\end{equation}
Here, $q_k(t)$ is the transition rate from $M_k$ to $M_{k+1}$
and $r_k(t)$ is the rate from $M_k$ to $M_{k-1}$ at time $t$,
and they depend on the structure of $G=(V,E)$, the activation parameters $\alpha_{ij}$
(and $\beta_i$ and $\gamma_i$ for self-activation and recovery respectively),
and the source set $S$. 

Recall that $\rho_k(t)$ is the probability of $N(t)$ being in state $M_k$
according to definition \eqref{eq:rho}.
Therefore, the time evolution of $\rho_k(t)$ is governed by
the discrete Fokker-Planck equation with these $q_k(t)$ and $r_k(t)$ to be determined:
\begin{align}
\rho_0'(t) & = -q_0(t) \rho_0(t) + r_1(t) \rho_1(t), \nonumber \\
\rho_k'(t) & = q_{k-1}(t)\rho_{k-1}(t)-[q_{k}(t)+
r_{k}(t)]\rho_{k}(t) + r_{k+1}(t)\rho_{k+1}(t),\ 0< k <K, \label{eq:FPEform}\\
\rho_K'(t) & = q_{K-1}(t)\rho_{K-1}(t)- r_{K}(t) \rho_{K}(t). \nonumber
\end{align}
To rewrite \eqref{eq:FPEform} into a concise matrix formulation, we 
define two $(K+1)\times(K+1)$ matrices $Q(t)$ and $R(t)$ as follows: 
\begin{align}
[Q(t)]_{j,j} &= -q_{j-1}(t), \quad [Q(t)]_{j,j+1}=q_{j-1}(t),\quad j=1,\dots,K \label{eq:Q} \\
[R(t)]_{j,j} &= -r_{j-1}(t), \quad [R(t)]_{j,j-1}=r_{j-1}(t),\quad j=2,\dots,K+1. \label{eq:R}
\end{align}
and all other entries are zeros. Here $[P]_{j,l}$ stands for the $(j,l)$-th entry of matrix $P$.
Note that only the diagonal and superdiagonal (subdiagonal) entries of $Q(t)$ ($R(t)$)
are nonzeros, and $[Q(t)]_{K+1,K+1}=[R(t)]_{1,1}=0$ for all $t$.
With matrices $Q(t)$ and $R(t)$ given above, we define
a row $(K+1)$-vector $\rho(t):=(\rho_0(t),\rho_1(t),\dots,\rho_K(t))$ and
rewrite \eqref{eq:FPEform} as 
\begin{equation}\label{eq:FPEform_mtx}
\rho'(t)=\rho(t)[Q(t)+R(t)].
\end{equation}

The system \eqref{eq:FPEform_mtx} is consistent with the nature of process $N(t)$ in \eqref{eq:FPEstruct}
with a tridiagonal transition matrix $Q(t)+R(t)$.
The initial value $\rho(0)$ can be easily determined given $S$: let $|S|$ denote the 
cardinality of $S$, then $\rho(0)$ is a binary $(K+1)$-vector such that
$\rho_{|S|}(0)=1$ and $\rho_k(0)=0$ for all $k\neq |S|$.
Therefore, we can solve \eqref{eq:FPEform} for $\rho(t)$ 
to obtain influence $\influ(t)$ based on \eqref{eq:influ}
once the transition rates $q_k(t)$ and $r_k(t)$ are determined.
The following subsection is devoted to the estimation of these rates.

\subsection{Estimation of transition rates $q_k(t)$ and $r_k(t)$}\label{subsec:rates}
%As we can see from \eqref{eq:FPEform}, the key to obtaining
%the evolving probability $\rho(t)$ is estimating the instantaneous
%hazard rate $q_k(t,\rho_{[0,t)})$ and reversing rate $r_k(t,\rho_{[0,t)})$.
%We have calculated these rates in the complete graph example
%in \eqref{eq:compgraphQ} which turn to be constant.
%In general, however, an analytic formulation of
%these rates in terms of $t$ and $\rho$ is intractable. 
%Therefore, we propose several methods to approximate these rates and
%discuss their properties and computational complexity.
%The practical performance of these methods are shown in 
%the numerical experiment section.

%We temporarily assume there are no recovery in propagation
%(hence $r_k=0$ for all $k$), and focus on the estimation of
%$q_k$. 
Recall that $q_k(t)$ stands for the transition rate of $N(t)$ from $M_k$
to $M_{k+1}$ as shown in \eqref{eq:FPEstruct}. 
Namely, $q_k(t)$ is the instantaneous rate for the $(k+1)$-th node
to be activated given that there are currently $k$ activated node
(with numerous possible choices of such $k$ nodes in $V$ and
$q_k(t)$ aggregates all the information) at time $t$.
Similarly, $r_k(t)$ is the instantaneous rate for any of these
$k$ activated nodes to get recovered. Therefore, we focus
on the estimation of $q_k(t)$ and a similar derivation can be easily 
carried out for $r_k(t)$. 

The estimation of rate $q_k(t)$ consists of two factors: 
(i) the identities of the $k$ currently activated nodes,
and (ii) the instantaneous activation rate imposed by these $k$ nodes 
to all the unactivated nodes at the time $t$.
%In terms of instantaneous hazard rate $\alpha(t;i_1,\dots,i_k)$ in (ii), we meant that 
%%
%\begin{equation}\label{eq:combhazard}
%\alpha(t;i_1,\dots,i_k) :=\lim_{\Delta t\to0^+}\frac{1}{\Delta t}\ex\left[N(t+\Delta t|i_1,\cdots,i_k)-N(t|i_1,\cdots,i_k)\right] 
%\end{equation}
%%
%and $N(t|i_1,\cdots,i_k)$ is the event (a successful activation is considered as one event) 
%counting process at time $t$ given that $i_1,\dots,i_k$ are active at the moment.
%The instantaneous recovery rate $r_k$ can be defined in a similar way,
%except that the event is a recovery of an active node.
For factor (ii), the propagation model with exponentially distributed activation times 
yield constant instantaneous rates if the identities of the $k$ nodes are given.
For factor (i), $q_k(t)$ need to aggregate all ${K \choose k}$ 
possible combinations of $k$ activated nodes. 
The following theorem provides the compositions of $q_k(t)$ and $r_k(t)$.
Here we call $U$ activated if all nodes in $U$ are activated and the others in $U^c=U\setminus V$ are 
unactivated.
The proof frequently calls two simple facts about selection probability given multiple
instantaneous rates, which we provide as Propositions \ref{prop:minT} and \ref{prop:mixrate} 
in the Appendix for completeness.
\begin{theorem}\label{thm:qrest}
For every $k=0,1,\dots,K$, let $\Scal_k:=\{U\subset V: |U|=k\}$ be the collection of all subsets of size $k$ in $V$.
Let $\pr(t; U)$ be the probability that $U$ is activated among those in $\Scal_k$,
and define
\begin{equation}\label{eq:alpha}
\alpha(U)=\sum_{i\in U}\sum_{j\in N_i^{\mathrm{out}}\cap U^c} \alpha_{ij},\quad
\beta(U) = \sum_{i \in U} \beta_i,\quad
\gamma(U)=\sum_{i \in U} \gamma_i.
\end{equation} 
Then the transition rates $q_k(t)$ and $r_k(t)$ in \eqref{eq:FPEform} are given by
\begin{equation}\label{eq:rates}
q_k(t)=\sum_{U\in \Scal_k} \left[\alpha(U)+\beta(U^c)\right]\pr(t; U)\ \ \mbox{and }\ % \label{eq:qk}
r_k(t)=\sum_{U\in \Scal_k}\gamma(U)\pr(t; U).%label{eq:rk}
\end{equation}
\end{theorem}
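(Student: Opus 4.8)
The plan is to lift the lumped counting process $N(t)$ to the full configuration process on $2^V$, where the exponentially distributed activation times make the dynamics genuinely Markov, read off the instantaneous transition rates there, and then project back onto the states $M_k$ by summing over configurations. Concretely, I would introduce the configuration process $X(t)\subset V$ recording the exact set of currently activated nodes, so that $N(t)=|X(t)|$, and write $p_U(t):=\pr(X(t)=U)$ for each $U\subset V$. By the memorylessness of the exponential distribution, the remaining waiting time of any pending activation or recovery depends only on the present configuration $U$ and not on the past, so $X(t)$ is a time-homogeneous continuous-time Markov chain whose law obeys the master (forward Kolmogorov) equation. The lumping relations are then simply $\rho_k(t)=\sum_{U\in\Scal_k}p_U(t)$ and, from \eqref{eq:rho}, $\pr(t;U)=p_U(t)/\rho_k(t)$.

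Next I would compute the exit rates from a fixed configuration $U$ with $|U|=k$. Conditioned on occupying state $U$, the process runs independent exponential clocks: rate $\alpha_{ij}$ for each node-to-node activation with $i\in U$ and $j\in N_i^{\mathrm{out}}\cap U^c$; rate $\beta_j$ for each self-activation $j\in U^c$; and rate $\gamma_i$ for each recovery $i\in U$. Propositions \ref{prop:minT} and \ref{prop:mixrate} give that these independent clocks aggregate into a single exponential whose rate is the sum, and that the mass splits across transition types in proportion to the individual rates. Hence the total rate of a transition increasing $N$ by one is $\alpha(U)+\beta(U^c)$ and the total rate of a transition decreasing $N$ by one is $\gamma(U)$, with $\alpha,\beta,\gamma$ as in \eqref{eq:alpha}. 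Since every event changes $|X(t)|$ by exactly one, the up-transitions out of $U$ all land in $\Scal_{k+1}$ and the down-transitions in $\Scal_{k-1}$, which is precisely the tridiagonal structure of \eqref{eq:FPEstruct}.

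The final step is to match flows. Summing the master equation over $U\in\Scal_k$ gives $\rho_k'(t)$, in which the mass leaving $M_k$ upward is $\sum_{U\in\Scal_k}[\alpha(U)+\beta(U^c)]\,p_U(t)$ and the mass leaving downward is $\sum_{U\in\Scal_k}\gamma(U)\,p_U(t)$. Comparing these with the outflow terms $q_k(t)\rho_k(t)$ and $r_k(t)\rho_k(t)$ of the Fokker-Planck system \eqref{eq:FPEform} and dividing by $\rho_k(t)$ yields \eqref{eq:rates}; the inflow terms $q_{k-1}\rho_{k-1}$ and $r_{k+1}\rho_{k+1}$ are then automatically consistent, since the total up-rate out of each $W\in\Scal_{k-1}$ is exactly the inflow it sends into $\Scal_k$.

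The crux is conceptual rather than computational: $N(t)$ is not itself Markov, because the true exit rate from $M_k$ genuinely depends on which configuration $U$ the chain occupies. What makes \eqref{eq:FPEform} correct is that $q_k(t)$ and $r_k(t)$ are exactly the configuration-averaged rates weighted by the evolving conditional law $\pr(t;U)$, so that all of the time-dependence of these effective rates is inherited from that conditional distribution. Verifying that this averaging is consistent simultaneously with the outflow and inflow terms—so that the class probabilities $\{\rho_k\}$ really do close into the birth-death form \eqref{eq:FPEform}—is the point that deserves the most care.
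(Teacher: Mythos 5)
Your proposal is correct, and it verifies the theorem by a somewhat different mechanism than the paper. The paper's proof works directly at the level of instantaneous rates: conditioned on the configuration $U$, Proposition \ref{prop:minT} aggregates the independent exponential clocks into the single rate $\alpha(U)+\beta(U^c)$, and then Proposition \ref{prop:mixrate} is invoked to mix these conditional rates over $U\in\Scal_k$ with weights $\pr(t;U)$, using the fact that the hazard rate of a mixture of exponentials at the current instant is the weighted average of the individual rates. You instead lift to the full configuration chain $X(t)$ on $2^V$, write its master (forward Kolmogorov) equation, and obtain \eqref{eq:rates} by summing that equation over $\Scal_k$ and matching the resulting flows against the birth--death form \eqref{eq:FPEform}, with $\pr(t;U)=p_U(t)/\rho_k(t)$. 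The shared content is the same conditioning and the same rate aggregation (your clock-splitting step is really Proposition \ref{prop:minT}; Proposition \ref{prop:mixrate} plays no role in your version, since in the master-equation route the effective rates are just summed generator entries). What your route buys is completeness: it proves that the exact marginal law $\{\rho_k(t)\}$ closes into the tridiagonal system \eqref{eq:FPEstruct}--\eqref{eq:FPEform} with these time-dependent effective rates, checking outflow and inflow consistency simultaneously---a point the paper's short proof asserts rather than derives---and it makes explicit your correct observation that $N(t)$ itself is not Markov, with all time-dependence of $q_k(t),r_k(t)$ inherited from the conditional law $\pr(t;U)$. What the paper's route buys is brevity and a direct probabilistic interpretation of $q_k(t)$ as the instantaneous hazard rate of the next activation time, without constructing the $2^K$-state chain. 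The only caveat in your write-up, shared by the paper, is that $\pr(t;U)$ is undefined when $\rho_k(t)=0$; this is harmless since the corresponding terms carry zero mass in the flow identities.
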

\begin{proof}
Suppose nodes in $U\subset \Scal_k$ are currently activated,
then these nodes tend to activate their neighbors still in $U^c$ independently
and simultaneously. 
More precisely, node $i\in U$ imposes a node-to-node activation rate
to each of its unactivated neighbor $j\in N_i^{\mathrm{out}}\cap U^c$ independently and simultaneously,
and hence total rate is $\sum_{j\in N_{i}^{\mathrm{out}}\cap U^c} \alpha_{ij}$.
Therefore, the combined instantaneous rate of all nodes in $U$ for node-to-node activation 
is given by $\alpha(U)$ in \eqref{eq:alpha} according to Proposition \ref{prop:minT}.
Meanwhile, each node $i$ in $U^c$ tends to be self activated with rate $\beta_i$
and hence their total instantaneous self-activation rate is $\beta(U)$ given in \eqref{eq:alpha}.
As $\pr(t;U)$ is the probability that nodes in $U$ are activated, we
obtain the total instantaneous rate $q_k(t)$ as for $N(t)$ to transit from
state $M_k$ to $M_{k+1}$ as \eqref{eq:rates} according to Proposition \ref{prop:mixrate}.
The derivation for $r_k(t)$ in \eqref{eq:rates} follows similarly.
\end{proof}

Note that there are $|\Scal_k|={K\choose k}$ possible combinations $U$ and
$\sum_{U\in \Scal_k} \pr(t;U)=1$ for all $t$. 
Moreover, $q_k(t)$ is a convex combination of the instantaneous
activation rates $\alpha(U)+\beta(U^c)$
with weights given by $\pr(t;U)$ according to Theorem \ref{thm:qrest}. 
Hence $q_k(t)$ is closer to the $\alpha(U)+\beta(U^c)$ with larger $\pr(t;U)$.
The composition of $r_k(t)$ in Theorem \ref{thm:qrest} has similar interpretation.
Although it is not practical to obtain the probability $\pr(t;U)$ 
for all $U$, Theorem \ref{thm:qrest} suggests that we can approximate 
$q_k(t)$ and $r_k(t)$ using the activation and recovery rates
of those $U$ with large weight $\pr(t;U)$. 

We now present two estimation methods and practical implementations 
using this idea for the case without
self-activation and recovery (which essentially yields the standard
susceptible-infection (SI) propagation model) on heterogeneous networks. 
In this case, we have
$q_k(t)=\sum_{U\in \Scal_k} \alpha(U)\pr(t;U)$ and $r_k(t)=0$.
Therefore, the key is to approximate $q_k(t)$ using $\alpha(U)$ 
of few $U$ with the largest probabilities $\pr(t;U)$.

\textit{Estimate $q_k$ based on the shortest distance.} 
For every $k=1,\dots,K$, we can easily determine a combination $U_k^*$ 
with large $\pr(t;U)$ over all $U$ in $\Scal_k$ as follows:
recall that the expected time for node $i$ to activate $j\in N_i^{\mathrm{out}}$ is $1/\alpha_{ij}$,
it is therefore natural to define the distance from $i$ to $j$ as $D(i,j):=1/\alpha_{ij}$, 
which can also be generated to the distance from set $S$ to a node $j$
as $D(S,j):=\min_{i\in S}D(i,j)$.
Due to independency of all node-to-node activations and property of
exponential distributions, 
the set $U_k^*$ consisting of the $k$ nodes with shortest distance to source $S$
has larger % (not necessarily the largest in some unusual cases)
probability to be activated first among those in $\Scal_k$. 
%Denote $D(S,i)$ the
%distance from $S$ to $i$, then we can easily obtain $U_k^*\in\Scal_k$ such that
%$D(S,i)\leq D(S,j)$ for any $i\in U_k^*$ and $j\notin U_k^*$.
In practical implementation, we apply Dijkstra's method \cite{Sniedovich:2006a}
on the weighted graph $G$ with edge weights given by $1/\alpha_{ij}$ and
origin $S$, and 
then sort the nodes as $i_1,i_2,\dots,i_K$ with ascending distance from source $S$, i.e.,
$D(S,i_1)\leq D(S,i_2)\leq \dots \leq D(S,i_K)$ (if $i\in S$ then $D(S,i)=0$)
and set $U_k^*=\{i_1,\dots,i_k\}$ for $k=1,\dots,K$. Then we approximate
$q_k(t)$ by $\qhat_k(t)=\alpha(U_k^*)$, which remains
as constant for all $t$ once the source set $S$ is given.
This method is referred to as \texttt{FPE-dist} in the numerical experiments.

%Despite its simplicity, the estimate using shortest distance
%between nodes and source set is shown to be very efficient and accurate
%when compared to recently developed methods
%in many tests, as shown in our 
%numerical experiments in Section \ref{sec:results}.
%However, we point out this estimate is rather coarse and may fail in a
%number of cases. For instance, consider
%a network $G$ where node $i$ has two children $j$ and $k$ as
%shown in \eqref{eq:SDcounterexample}. The arrows show the
%edges and the numbers above indicate the activation rates.
%%
%\begin{equation}\label{eq:SDcounterexample}
%\cdots \stackrel{1}{\longleftarrow} j \stackrel{1}{\longleftarrow} i 
%\xrightarrow{1-\epsilon} l \xrightarrow{10^3} \cdots
%\end{equation}
%%
%Suppose now that node $i$ is the source.
%Then even for $0<\epsilon\ll 1$, the shortest distance method yields
%an estimate $q_2=2-\epsilon\approx 2$, whereas according to Theorem \ref{thm:qrest} 
%the rate should rather be 
%$q_2=\frac{1}{2-\epsilon}+\frac{1-\epsilon}{2-\epsilon}\cdot10^3\approx \frac{10^3}{2}\gg 2$.
%This issue can be amended by using multiple dominating combinations, as
%shown in the following method.

\textit{Estimate $q_k$ based on the largest overall probabilities.}
To refine the approximation using single $U_k^*$ in \texttt{FPE-dist},
we can estimate $q_k(t)$ using multiple combinations $U$
with the largest probabilities. 
For a fixed $S$, 
%let $t_i$ be the (random) time that node $i$ gets activated,
%and define $T(U):=\max\{t_i:i\in U\}$ and $\tau(U):=\min\{t_i:i\in U\}$ for any given $U\subset V$.
%Then $\pr(t;U)$ is the probability that $U$ satisfies $T(U)\leq t\leq \tau(U^c)$ among all those in $\Scal_k$.
%We approximate $\pr(t;U$) by $\pr(U):=\pr\left(T(U)\leq \tau(U^c)\right)$,
%and then 
we employ the the following recursive method to determine the sets $\{U_k^{1},\dots,U_k^{m_k}\}\subset \Scal_k$
to be used in calculation of $q_k$ in \eqref{eq:rates}
for $k=1,2,\dots,K-1$. Here $m_k$ is a user-customized number of $k$-combinations 
selected from $\Scal_k$ (larger $m_k$ yields
more accurate approximation to $q_k$ at the expense of higher computation complexity.)
Suppose we have already obtained $U_k^{1},\dots,U_k^{m_k}$ for $k$
such that $\pr(U_k^{1})\geq \cdots \geq \pr(U_k^{m_k})$, 
our next step is to obtain $U_{k+1}^{1},\dots,U_{k+1}^{m_{k+1}}$.
To this end, we proceed with previous $U_k^l$ in the order of $l=1,\dots,m_k$
and compute $\alpha(j|U_k^l)$ for every neighbor node $j$ of $U_k^l$
(by neighbor $j$ of a subset $U$ we meant that $j\in N_i^{\mathrm{out}}$ for some $i\in U$,
and $\alpha(j|U):=\sum_{i\in U\cap N_j^{\mathrm{in}}}\alpha_{ij}$ is the total activation rate
imposed to $j$ by nodes in $U$.)
By Proposition \ref{prop:minT}, neighbor $j$ of $U_k^l$ will be activated before other neighbors $j'$
with probability $\pr(j|U_k^l)=\alpha(j|U_k^l)/\sum_{j'} \alpha(j'|U_k^l)$ where the summation in the denominator
is over all neighbors $j'$ of $U_k^l$.
Therefore, $\pr(U)=\pr(j|U_k^l)\pr(U_k^l)$ for $U:=U_k^l\cup\{j\}$
and $T(U_k^l)\leq t_j$. Note that each neighbor $j$ of $U_k^l$ yields such a $U$
of size $k+1$.
All these $U$'s are then candidates for $U_{k+1}^{1},\dots,U_{k+1}^{m_{k+1}}$
later. We proceed with each $U_k^l$ in the aforementioned way for $l=1,\dots,m_k$
and obtain a number of sets $U$'s with probabilities $\pr(U)$.
Note that if two or more of these $U$'s are identical, then we keep only one of them
and merge their probabilities $\pr(U)$. 
Then we sort these $U$'s with $\pr(U)$ in descending order and only keep the first $m_{k+1}$
as $U_{k+1}^{1},\dots,U_{k+1}^{m_{k+1}}$.
By this method, we are likely (but not guaranteed) to maintain a list $\{U_k^1,\dots,U_k^{m_{k}}\}$
with largest probabilities among all those in $\Scal_k$ for each $k$. 
Then we approximate $q_k(t)$ by $\qhat_k(t):=\sum_{l=1}^{m_k}\alpha(U_k^{l})\pr(U_k^{l})$
which is again constant for all $t$.
This method essentially constructs a branching tree with $K+1$ layers, where layer $k$
consists of $m_k$ nodes $U_k^1,\dots,U_k^{m_k}$ each having a relative probability in its layer,
and others with small probabilities in $\Scal_k$ are removed 
%and the tree nodes with sufficiently small probabilities in a propagation are removed
so that computation complexity is maintained within a feasible scale. 
We refer this method to as \texttt{FPE-tree} in the numerical experiments.

Once we obtained the estimate $\qhat_k(t)$, the last step is
to solve the Fokker-Planck equation $\rho'(t)=\rho(t)Q(t)$ numerically. 
There are two straightforward methods to compute $\rho(t)$:
the Runge-Kutta method which can handle time varying $Q(t)$
and very large $K$ (with computation complexity $O(K)$)
but needs to proceed the computation starting from $t=0$;
and direct computation of $\rho(t)=\rho(0)e^{\int_0^tQ(s)ds}$
with bidiagonal matrix $Q$. In particular, if $Q$ is constant,
then the computation $\rho(t)=\rho(0)e^{tQ}$ is very fast
using matrix exponential \cite{Moler:2003a,Sidje:1998a,Xue:2013a}
and can be directly done for any specific $t>0$ rather than from $t=0$.

The steps for influence prediction using Fokker-Planck equation 
\eqref{eq:FPEform} is summarized in Algorithm \ref{alg:FPE}.
For completeness we include the self-activation and recovery rates.

\begin{algorithm}
\caption{Influence prediction based on Fokker-Planck equation \eqref{eq:FPEform_mtx}}
\label{alg:FPE}
\begin{algorithmic}[1]
\STATE{\textbf{input} $G=(V,E)$, $\{\alpha_{ij},\beta_i,\gamma_i:(i,j)\in E, i\in V\}$. Give source set $S\subset V$.}
\STATE{Estimate $\{q_k(t),r_k(t):t\geq0\}$ defined in \eqref{eq:rates} and form matrices $Q(t),R(t)$ as in \eqref{eq:Q}-\eqref{eq:R}.}
\STATE{Solve $\rho'(t)=\rho(t)[Q(t)+R(t)]$ with initial $\rho(0)$ to obtain $\rho(t)$.}
\RETURN Output influence $\influ(t)=\sum_{k=0}^{K}k \rho_k(t)=\rho(t)(0,1,\dots,K)^T$. 
\end{algorithmic} 
\end{algorithm}
%\begin{itemize}
%\item[\textbf{Step 0.}] Input $G=(V,E)$, $\{\alpha_{ij},\beta_i,\gamma_i:(i,j)\in E, i\in V\}$. Give source set $S\subset V$.
%\item[\textbf{Step 1.}] Estimate $\{q_k(t),r_k(t):t\geq0\}$ defined in \eqref{eq:rates} and form matrices $Q(t),R(t)$ as in \eqref{eq:Q}-\eqref{eq:R}.
%\item[\textbf{Step 2.}] Solve $\rho'(t)=\rho(t)[Q(t)+R(t)]$ with initial $\rho(0)$ to obtain $\rho(t)$.
%\item[\textbf{Step 3.}] Output influence $\influ(t)=\sum_{k=0}^{K}k \rho_k(t)=\rho(t)(0,1,\dots,K)^T$.
%\end{itemize}

\subsection{Error estimate for influence prediction}\label{subsec:error}
In this section, we conduct error analysis of the proposed influence prediction method.
For simplicity, we consider the case without recovery scenario, and assume that
the propagation starts with self-activation, i.e., $\rho(0)=(1,0,\dots,0)\in\mathbb{R}^{K+1}$,
since derivations generalize to other initials trivially.
%We show that, the relative error of the predicated influence decays exponentially
%fast provided that the error of estimated rates of $q_k$ is uniformly bounded.
It is worth noting that the results obtained in this section apply to 
any propagation model.

We first observe that the solution $\rho(t)=(\rho_0(t),\dots,\rho_K(t))$ of $\rho'(t)=\rho(t)Q(t)$ 
with initial value $\rho(0)=(1,0,\dots,0)$ is
\begin{align}\label{eq:FPEsol}
\rho_0(t) &= e^{-\int_0^t q_0(s)\diff s}, \nonumber\\
\rho _{k+1}\left( t\right) &=\int ^{t}_{0}\rho _{k}\left( s\right) q_{k}\left( s\right) e^{-\int ^{t}_{s}q_{k+1}\left( u\right) \diff u}\diff s,\ \mbox{for }k=0,1,\dots,K-2, \\
\rho_K(t) &= \int_0^t \rho_{K-1}(s)q_{K-1}(s)\diff s. \nonumber
\end{align}
Now, for every $k=0,1,\dots,K-1$, let $Q_k$ denote the perturbed rate matrix as
\begin{equation}
Q_k(t)=
\begin{pmatrix}
& \ddots & \ddots & & & \\
& & -q_{k-1}(t) & q_{k-1}(t) & 0 & \\
& & 0 & -\qhat_{k}(t) & \qhat_{k}(t) & & \\
& & & & \ddots & \ddots
\end{pmatrix}
\end{equation}
That is, $Q_k(t)$ differs from the orignal $Q(t)$ by replacing $q_j(t)$
with $\qhat_j(t)$ for $j=k,k+1,\dots,K-1$.
Then we have the following lemma that relates the error in solution $\rho(t)$ to
the error in estimating $q_k(t)$.

\begin{lemma}\label{lemma:rho_error}
Let $\epsilon\in(0,1)$, and $\rho$ and $\rhohat$ be the solutions of $\rho'(t)=\rho(t)Q_{k+1}(t)$ and
$\rhohat'(t)=\rhohat(t)Q_{k}(t)$, respectively. Denote $\delta_k(t):=|\qhat_k(t)-q_k(t)|/q_k(t)$. 
If $\alphabar>0$ is the upper bound of all activation rates between nodes in $G=(V,E)$ 
and that
\begin{equation}\label{eq:delta_bound}
\delta_k(t)\leq \min\left\{\frac{\log(1+\frac{\epsilon}{2})}{\alphabar k t\min(\dbar,K-k)}, \frac{\epsilon}{2+\epsilon}\right\}
\end{equation} 
where $\dbar=\max\{|N_i^{\mathrm{out}} |:i\in V\}$, then 
$\rho_j(t)=\rhohat_j(t)$ for $j=0,1,\dots,k-1$ and
$|\rhohat_j(t)-\rho_j(t)|/\rho_j(t)\leq \epsilon$ for $j=k,\dots,K$ and all $t>0$.
Moreover, $|\influhat(t)-\influ(t)|/\influ(t)\leq \epsilon$ for all $t$.
\end{lemma}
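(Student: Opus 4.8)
The plan is to exploit the explicit triangular (recursive) form of the solution \eqref{eq:FPEsol}, in which $\rho_j(t)$ is determined entirely by the rates $q_0,\dots,q_{\min(j,K-1)}$. Since $Q_k(t)$ and $Q_{k+1}(t)$ differ in a single entry --- $Q_{k+1}$ keeps the exact $q_k$ while $Q_k$ uses $\qhat_k$ --- and agree on $q_0,\dots,q_{k-1}$, the first assertion $\rho_j=\rhohat_j$ for $j\le k-1$ follows immediately by induction on $j$ from \eqref{eq:FPEsol}: each such $\rho_j$ involves only rates indexed below $k$, which are identical in the two systems. The same observation isolates the source of error, since $\rhohat_k$ and $\rho_k$ are built from the common factor $\rho_{k-1}(s)q_{k-1}(s)$ but with the two different exponential kernels $e^{-\int_s^t \qhat_k}$ and $e^{-\int_s^t q_k}$.

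First I would record the a priori bound $q_k(t)\le \alphabar\, k\,\min(\dbar,K-k)$. This comes from Theorem \ref{thm:qrest}: $q_k(t)=\sum_{U\in\Scal_k}\alpha(U)\pr(t;U)$ is a convex combination of the quantities $\alpha(U)$, and each $\alpha(U)$ sums at most $k\min(\dbar,K-k)$ edge weights, each of size $\le\alphabar$. This is what converts the relative error $\delta_k$ into a usable absolute bound. For the base layer $j=k$, I would bound the log-ratio of the two kernels by $\int_s^t|q_k-\qhat_k|\,\diff u=\int_s^t\delta_k q_k\,\diff u\le \alphabar\, k\,\min(\dbar,K-k)\,t\,\delta_k\le\log(1+\tfrac{\epsilon}{2})$, where the last step uses the first term in the minimum of \eqref{eq:delta_bound} (treating $\delta_k$ as its uniform bound over $[0,t]$). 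Hence $e^{-\int_s^t\qhat_k}\in[(1+\tfrac\epsilon2)^{-1},\,1+\tfrac\epsilon2]\cdot e^{-\int_s^t q_k}$ pointwise, and integrating against the nonnegative weight $\rho_{k-1}(s)q_{k-1}(s)$ yields $\rhohat_k(t)/\rho_k(t)\in[(1+\tfrac\epsilon2)^{-1},1+\tfrac\epsilon2]$.

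The inductive step climbs the layers. The crucial structural fact is that for every layer above $k$ the two systems $Q_k$ and $Q_{k+1}$ use the \emph{same} rates, so no new kernel discrepancy appears and the only difference propagated upward is the multiplicative deviation already present in the lower component. At $j=k+1$ the integrand ratio factorizes as $\tfrac{\rhohat_k}{\rho_k}\cdot\tfrac{\qhat_k}{q_k}$; using the layer-$k$ bound together with $\qhat_k/q_k\in[1-\delta_k,1+\delta_k]$ and $\delta_k\le\epsilon/(2+\epsilon)$ from the second term of \eqref{eq:delta_bound}, the identity $(1+\tfrac\epsilon2)(1+\tfrac{\epsilon}{2+\epsilon})=1+\epsilon$ (and $(1+\tfrac\epsilon2)^{-1}=1-\tfrac{\epsilon}{2+\epsilon}$) shows the product lies in $[1-\epsilon,1+\epsilon]$, whence $\rhohat_{k+1}/\rho_{k+1}\in[1-\epsilon,1+\epsilon]$. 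For $j\ge k+2$ the integrand differs only through the factor $\rhohat_{j-1}/\rho_{j-1}$, so the bound $[1-\epsilon,1+\epsilon]$ is simply inherited --- it does not compound --- because the shared nonnegative integral kernel preserves two-sided multiplicative bounds. This gives $|\rhohat_j-\rho_j|/\rho_j\le\epsilon$ for all $j\ge k$.

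Finally, the influence estimate drops out: since $\rhohat_j=\rho_j$ for $j<k$, $|\influhat(t)-\influ(t)|\le\sum_{j=k}^K j\,|\rhohat_j-\rho_j|\le\epsilon\sum_{j=k}^K j\rho_j\le\epsilon\sum_{j=0}^K j\rho_j=\epsilon\,\influ(t)$, where the penultimate inequality uses $\rho_j\ge0$. I expect the main obstacle to be the two-regime bookkeeping at the base layers: one must keep the exponential-kernel perturbation (governed by the first, $t$-dependent term of \eqref{eq:delta_bound} and requiring the upper bound on $q_k$) separate from the drift-coefficient perturbation (governed by the second term), and then verify that their combination is exactly $1+\epsilon$ rather than something that degrades as layers accumulate. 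Establishing that the bound genuinely does not compound --- that above $k+1$ only the inherited probability ratio matters --- is the conceptual crux.
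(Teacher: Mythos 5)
Your proposal is correct and follows essentially the same route as the paper's proof: both rest on the explicit recursive solution \eqref{eq:FPEsol}, use $q_k(t)\leq\alphabar k\min(\dbar,K-k)$ with the first term of \eqref{eq:delta_bound} to control the exponential-kernel perturbation at layer $k$, combine it with the coefficient perturbation $\delta_k\leq\epsilon/(2+\epsilon)$ at layer $k+1$, and then note that the bound is merely inherited (not compounded) above layer $k+1$ because the two systems share all rates there. The only cosmetic difference is bookkeeping: you track two-sided multiplicative ratio bounds via the identity $(1+\frac{\epsilon}{2})(1+\frac{\epsilon}{2+\epsilon})=1+\epsilon$, whereas the paper tracks one-sided relative errors through the estimate $|\rho_k(t)q_k(t)-\rhohat_k(t)\qhat_k(t)|\leq\epsilon\rho_k(t)q_k(t)$.
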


\begin{proof}
If $k>0$, from the solution formulation \eqref{eq:FPEsol}, we know that
$\rho_j(t)=\rhohat_j(t)$ for all $t$ and $j=0,1,\dots,k-1$. Furthermore, there are
\begin{align}
\rho _{k}(t) =\int ^{t}_{0}\rho_{k-1}(s) q_{k-1}(s) e^{-\int ^{t}_{s}q_{k}(u) \diff u}\diff s, \label{eq:rhok}\\
\rhohat _{k}(t) =\int ^{t}_{0}\rho_{k-1}(s) q_{k-1}(s) e^{-\int ^{t}_{s}\qhat_{k}(u) \diff u}\diff s. \label{eq:rhohatk}
\end{align}
Since $q_k(t)\leq \alphabar k \min(\dbar,K-k)$ and \eqref{eq:delta_bound}, there are
$\int_0^t\delta_k(s)q_k(s)\diff s\leq \log(1+\frac{\epsilon}{2})$ and
\begin{equation}
\left| e^{-\int_s^t(\qhat_k(u)-q_k(u))\diff u}-1\right| \leq e^{\int_s^t\delta_k(u)q_k(u)\diff s}-1
\leq e^{\int_0^t\delta_k(u)q_k(u)\diff s}-1\leq \frac{\epsilon}{2}
\end{equation}
for all $s\in(0,t)$. Therefore, from \eqref{eq:rhok} and \eqref{eq:rhohatk} there is
\begin{align*}
\frac{|\rhohat_k(t)-\rho_k(t)|}{\rho_k(t)}
& \leq \frac{1}{\rho_k(t)}\int_0^t \rho_{k-1}(s) q_{k-1}(s) e^{-\int ^{t}_{s}q_{k}(u)\diff u} \left|e^{-\int_s^t(\qhat_k(u)-q_k(u))\diff u}-1\right|\diff s \\
& \leq \frac{\epsilon}{2\rho_k(t)}\int_0^t \rho_{k-1}(s) q_{k-1}(s) e^{-\int ^{t}_{s}q_{k}(u)\diff u}\diff s=\frac{\epsilon}{2}.
\end{align*}
If $k=0$, then $\rho_0(t)=e^{-\int ^{t}_{0}q_{0}(s) \diff s}$ and $\rhohat_0(t)=e^{-\int ^{t}_{0}\qhat_{0}(s) \diff s}$
and one can check that this inequality still holds.

As $|\rhohat_k(t)-\rho_k(t)|/\rho_k(t)\leq\epsilon/2$, there is $\rhohat_k(t) \leq (1+\epsilon/2)\rho_k(t)$ and hence
\begin{multline}
|\rho_k(t)q_k(t)-\rhohat_k(t)\qhat_k(t)| \leq |\rho_k(t)-\rhohat_k(t)|q_k(t) + \rhohat_k(t)|q_k(t)-\qhat_k(t)) | \\
\leq  \frac{\epsilon}{2}\rho_k(t)q_k(t) + \left(1+\frac{\epsilon}{2}\right)\rho_k(t)\delta_k(t)q_k(t)\leq \epsilon \rho_k(t)q_k(t)
\end{multline}
for all $t\geq0$, where we used the fact that $(1+\frac{\epsilon}{2})\delta_k(t)\leq\frac{\epsilon}{2}$ from
\eqref{eq:delta_bound}. Due to the general formulation of solution \eqref{eq:FPEsol}, there are
\begin{align}
\rho _{k+1}(t) =\int ^{t}_{0}\rho_{k}(s) q_{k}(s) e^{-\int ^{t}_{s}\qhat_{k+1}(u) \diff u}\diff s \\
\rhohat _{k+1}(t) =\int ^{t}_{0}\rhohat_{k}(s)\qhat_{k}(s) e^{-\int ^{t}_{s}\qhat_{k+1}(u) \diff u}\diff s
\end{align}
Then we can bound their difference as follows,
\begin{align*}
\frac{|\rhohat_{k+1}(t)-\rho_{k+1}(t)|}{\rho_{k+1}(t)}&\leq\frac{1}{\rho_{k+1}(t)}\int_0^t
|\rho_k(s)q_k(s)-\rhohat_k(s)\qhat_k(s)|e^{-\int ^{t}_{s}\qhat_{k+1}(u) \diff u}\diff s \\
&\leq\frac{\epsilon}{\rho_{k+1}(t)}\int ^{t}_{0}\rho_{k}(s) q_{k}(s) e^{-\int ^{t}_{s}\qhat_{k+1}(u) \diff u}\diff s =\epsilon.
\end{align*}
For $j=k+1,\dots,K$, $\qhat_{j}(t)$ is the same for both $\rho(t)$ and $\rhohat(t)$ in \eqref{eq:FPEsol}, one can readily check that
$|\rhohat_j(t)-\rho_ j(t)|/\rho_j(t)\leq \epsilon$ implies that $|\rhohat_{j+1}(t)-\rho_{j+1}(t)|/\rho_{j+1}(t)\leq \epsilon$,
Therefore,
\begin{equation}
\frac{|\influhat(t)-\influ(t)|}{\influ(t)} \leq \frac{1}{\influ(t)}\sum_{j=k}^Kj|\rhohat_j(t)-\rho_j(t)|
\leq \frac{\epsilon}{\influ(t)}\sum_{j=k}^Kj\rho_j(t)\leq \epsilon
\end{equation}
for all $t\geq0$, 
which completes the proof.
\end{proof}

\begin{theorem}\label{thm:influ_error}
Let $\epsilon\in(0,1)$, and $\rho(t)$ and $\rhohat(t)$ be the solutions of $\rho'(t)=\rho(t) Q(t)$
and $\rhohat'(t)=\rhohat(t) \Qhat(t)$ where $\Qhat(t):=Q_{0}(t)$, 
respectively, and $\influ(t)=\sum_{k=0}^{K} k\rho_k(t)$
and $\influhat(t)=\sum_{k=0}^{K} k\rhohat_k(t)$.
If \eqref{eq:delta_bound} holds for $k=0,\dots,K-1$ and there exist upper bound $\alphabar$ and
lower bound $\alphaul>0$ for all activation rates in $G=(V,E)$, then
\begin{equation}\label{eq:influ_error}
\frac{|\influhat(t)-\influ(t)|}{\influ(t)}\leq [(1+\epsilon)^K-1]\min\left\{1,c_K(t)e^{-\alphaul t}\right\}, \quad \forall t\geq0,
\end{equation}
where $\qbar:=\max_{k}\{q_k\}$ is bounded and $c_K(t):=\frac{1}{K}\sum_{j=0}^{K-1}\frac{K-j}{j!}(\qbar t)^j$.
%satisfies that $\qbar\leq \alphabar \dbar(K-\dbar)$ if $\dbar\leq K/2$ 
%and $\qbar\leq \alphabar K^2/2$ if $\dbar>K/2$.
\end{theorem}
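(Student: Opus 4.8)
The plan is to exploit the telescoping structure of the perturbed matrices $Q_0,\dots,Q_K$ together with Lemma \ref{lemma:rho_error}. Since $Q_K(t)=Q(t)$ and $Q_0(t)=\Qhat(t)$, I would introduce the intermediate solutions $\rho^{(k)}$ of $\rho'(t)=\rho(t)Q_k(t)$, so that $\rho^{(K)}=\rho$ and $\rho^{(0)}=\rhohat$. Applying Lemma \ref{lemma:rho_error} to each consecutive pair $(Q_{k+1},Q_k)$ gives, componentwise, $\rho^{(k)}_j=\rho^{(k+1)}_j$ for $j<k$ and $|\rho^{(k)}_j-\rho^{(k+1)}_j|\le\epsilon\,\rho^{(k+1)}_j$ for $j\ge k$. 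Composing these $K$ one-step estimates along the chain, each fixed component picks up at most one factor $(1\pm\epsilon)$ per affected step, so that $(1-\epsilon)^K\rho_j(t)\le\rhohat_j(t)\le(1+\epsilon)^K\rho_j(t)$; summing $j\rho_j$ then yields $(1-\epsilon)^K\influ(t)\le\influhat(t)\le(1+\epsilon)^K\influ(t)$ and hence the crude, time-uniform estimate $|\influhat(t)-\influ(t)|/\influ(t)\le(1+\epsilon)^K-1$. This is exactly the branch of the minimum corresponding to the constant $1$.

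For the decaying branch I would switch from the telescoped multiplicative bound to an additive one anchored at the absorbing state. Using mass conservation $\sum_j\rho_j=\sum_j\rhohat_j=1$, I can rewrite the influence gap purely in terms of the nonequilibrium states,
\begin{equation*}
\influhat(t)-\influ(t)=\sum_{j=0}^{K-1}(K-j)\bigl(\rho_j(t)-\rhohat_j(t)\bigr),
\end{equation*}
so the componentwise bounds from the previous step give $|\influhat(t)-\influ(t)|\le[(1+\epsilon)^K-1]\sum_{j=0}^{K-1}(K-j)\rho_j(t)=[(1+\epsilon)^K-1]\,(K-\influ(t))$. The gain is that this replaces the full influence by the tail mass $K-\influ(t)$, which is small once the process has had time to concentrate near the absorbing state $M_K$.

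The technical core is then a decay estimate on the individual probabilities. Using the rate bounds $\alphaul\le q_k(t)\le\qbar$ (which hold because every non-absorbing state admits at least one activation and the rates are bounded), I would prove by induction on $j$, directly from the integral representation \eqref{eq:FPEsol}, that $\rho_j(t)\le\frac{(\qbar t)^j}{j!}e^{-\alphaul t}$ for $j=0,\dots,K-1$: the base case is $\rho_0(t)=e^{-\int_0^t q_0(s)\diff s}\le e^{-\alphaul t}$, and the inductive step substitutes the hypothesis into $\rho_{j+1}(t)=\int_0^t\rho_j(s)q_j(s)e^{-\int_s^t q_{j+1}(u)\diff u}\diff s$, bounds $q_j\le\qbar$ and $e^{-\int_s^t q_{j+1}}\le e^{-\alphaul(t-s)}$, and integrates the resulting monomial $s^j$. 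Summing with weights $K-j$ produces $K-\influ(t)\le K\,c_K(t)\,e^{-\alphaul t}$, which, after normalization, delivers the second branch of the minimum.

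The step I expect to be most delicate is the passage from the absolute bound $|\influhat-\influ|\le[(1+\epsilon)^K-1](K-\influ)$ to a relative bound of the stated decaying form: dividing by $\influ(t)$ is harmless for large $t$, where $\influ(t)\to K$, but near $t=0$ the influence is small while the tail factor $c_K(t)e^{-\alphaul t}$ is close to $1$, and the naive ratio $(K-\influ)/\influ$ can blow up. This is precisely why the final bound is stated as a minimum of the constant-$1$ branch and the $c_K(t)e^{-\alphaul t}$ branch rather than the second branch alone. I would therefore combine the two branches by letting the crude estimate govern the small-$t$ regime and the tail estimate govern the large-$t$ regime, using $c_K(0)=1$ to check that the two coincide at the crossover, and being careful that the lower bound on $\influ(t)$ used in the normalization is only invoked where $c_K(t)e^{-\alphaul t}\le 1$.
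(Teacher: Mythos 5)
Your proposal follows essentially the same route as the paper's own proof: the first branch by telescoping Lemma \ref{lemma:rho_error} through the chain of intermediate solutions of $\rho'(t)=\rho(t)Q_k(t)$ to get $(1-\epsilon)^K\leq \influhat(t)/\influ(t)\leq(1+\epsilon)^K$, and the second branch by using mass conservation to rewrite the gap as $\sum_{j=0}^{K-1}(K-j)(\rhohat_j-\rho_j)$, proving $\rho_j(t)\leq \frac{(\qbar t)^j}{j!}e^{-\alphaul t}$ by exactly the same induction on \eqref{eq:FPEsol}, and summing with weights $K-j$. The normalization subtlety you flag at the end (the residual factor $K/\influ(t)$ after dividing by $\influ(t)$) is genuine, but it is equally present in the paper's proof, which bounds $\influ(t)\geq|S|$ yet states the conclusion with $c_K(t)$ normalized by $K$; so your attempt is correct to the same extent as, and in the same way as, the published argument.
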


\begin{proof}
For every $k=0,1,\dots,K-1$, let $\influ_k(t)$ be the influence estimated by solving 
differential equation $\rho'(t)=\rho(t) Q_k(t)$.
Then Lemma \ref{lemma:rho_error} shows that $|\influ_k(t)-\influ_{k+1}(t)|/\influ_{k+1}(t)\leq \epsilon$
for $k=0,\dots,K-2$ and $|\influ_{K-1}-\influ(t)|/\influ(t)\leq \epsilon$ 
provided \eqref{eq:delta_bound} holds for all $k$. Therefore
$1-\epsilon\leq \frac{\influ_k(t)}{\influ_{k+1}(t)}\leq 1+\epsilon$ and
$1-\epsilon\leq \frac{\influ_{K-1}(t)}{\influ(t)}\leq 1+\epsilon$, and hence
\begin{equation}\label{eq:influ_multi}
(1-\epsilon)^K \leq \frac{\influhat(t)}{\influ(t)}=\frac{\influ_{0}(t)}{\influ(t)}=\frac{\influ_{K-1}(t)}{\influ(t)}\cdots
\frac{\influ_1(t)}{\influ_2(t)}\frac{\influ_0(t)}{\influ_1(t)}\leq (1+\epsilon)^K.
\end{equation}
Therefore $|\influhat(t)-\influ(t)|/\influ(t)\leq \max\{1-(1-\epsilon)^K,(1+\epsilon)^K-1\}=(1+\epsilon)^K-1$.

On the other hand, we have $\alphaul \leq q_k(t) \leq \alphabar k \min\{\dbar,K-d\}$
and hence $\alphaul\leq q_k(t) \leq \qbar$ for all $k=0,1,\dots,K-1$ and $t\geq0$.
Here $\qbar\leq \alphabar \dbar(K-\dbar)$ if $\dbar\leq \frac{K}{2}$ 
and $\qbar\leq \frac{\alphabar K^2}{4}$ if $\dbar>\frac{K}{2}$.
By induction we claim that $\rho_k(t)\leq \frac{(\qbar t)^k}{k!}e^{-\alphaul t}$ for $k=0,\dots,K-1$ as follows:
the claim is obviously true for $k=0$; suppose it is true for $k\leq K-2$, then
\begin{align}
\rho _{k+1}\left( t\right) &=\int ^{t}_{0}\rho _{k}\left( s\right) q_{k}\left( s\right) e^{-\int ^{t}_{s}q_{k+1}\left( u\right) \diff u}\diff s 
\leq \int_0^t \frac{(\qbar s)^k}{k!}e^{-\alphaul s} \qbar e^{-\alphaul (t-s)} \diff s \nonumber\\
& =\frac{\qbar^{k+1}e^{-\alphaul t}}{k!}\int_0^t s^k \diff s =
\frac{(\qbar t)^{k+1}}{(k+1)!}e^{-\alphaul t}. \nonumber
\end{align}
Moreover, from Lemma \eqref{lemma:rho_error} 
we can readily deduce that $(1-\epsilon)^{j+1}\leq \rhohat_j(t)/\rho_j(t) \leq (1+\epsilon)^{j+1}$
similar as for \eqref{eq:influ_multi}.
Hence $|\rhohat_j(t)-\rho_j(t)|/\rho_j(t)\leq \epsilon_j:=(1+\epsilon)^{j+1}-1$
for $j=0,1,\dots,K-1$. Therefore, we have
\begin{align*}
\frac{|\influhat(t)-\influ(t)|}{\influ(t)} &= 
\frac{1}{\influ(t)}\left|\sum_{j=0}^K j\left(\rhohat_j(t)-\rho_j(t)\right)\right|
=\frac{1}{\influ(t)}\left|\sum_{j=0}^{K-1} (K-j)\left(\rhohat_j(t)-\rho_j(t)\right)\right| \\
&\leq \frac{1}{\influ(t)}\sum_{j=0}^{K-1} (K-j)\left|\rhohat_j(t)-\rho_j(t)\right| 
\leq \frac{1}{\influ(t)}\sum_{j=0}^{K-1} (K-j)\epsilon_j\rho_j(t) \\
&\leq \frac{\epsilon_{K-1}}{|S|}\sum_{j=0}^{K-1} (K-j)\rho_j(t) 
\leq \frac{\epsilon_{K-1}e^{-\alphaul t}}{|S|}\sum_{j=0}^{K-1} \frac{K-j}{j!}(\qbar t)^j\\
&=\epsilon_{K-1} c_K(t)e^{-\alphaul t}
\end{align*}
where we used the fact that $\rho_K(t)=1-\sum_{j=0}^{K-1}\rho_j(t)$
in the second equality, $\epsilon_{K-1}\geq \epsilon_j$ for all $j=0,\dots,K-1$ and 
$\mu(t)\geq |S|$ in the fourth inequality\footnote{The lower bound $\mu(t)\geq |S|$
is loose as $\mu(t)$ increases from $|S|$ to $K$ along $t$. This is not an issue in the estimate above
if $|S|\geq 1$. If $|S|=0$ then one can assume existence of a pre-activated node (in addition to $V$) that
activates each $i\in V$ at rate $\beta_i$ since $t=0$ to mimic the self-activations, 
and a modified estimate can be applied trivially so we omit the details here.}.
Combining the two bounds of $|\influhat(t)-\influ(t)|/\influ(t)$ above, we obtain \eqref{eq:influ_error}.
%This means that $\rho_0(t),\dots,\rho_{K-1}(t)$ all decay to $0$ exponentially fast, and 
%hence $q_K(t)=1-\sum_{k=0}^{K-1}q_k(t)$ converges to $1$ exponentially fast. 
%Therefore, $\influ(t)=\sum_{k=0}^K k\rho_k(t)\to K$ exponentially fast.
%Note that this is true regardless of $Q$. Therefore both $\influ(t)$ and $\influhat(t)$
%converge to $K$ exponentially fast, which implies that $\influhat(t)$ converges to $\influ(t)$
%exponentially fast. 
%By the way, one of the earlier conclusions that $|\rho_k(t)-\rhohat_k(t)|\leq O(e^{-(1-c)qt})$ can be verified here.
\end{proof}

Theorem \ref{thm:influ_error} shows that an $O(1/t)$ decay of error in estimated $\qhat_k(t)$ 
results in an exponential $O(e^{-\alphaul t})$ decay  of error in predicted influence $\influhat(t)$.
This result implies that for an exponentially decaying error in $\influhat(t)$
the estimation error in $\qhat_k(t)$ only needs to remain about as constant
for all sufficiently large $t$.

\begin{corollary}\label{cor:influ_error}
Suppose $\rho(t),\rhohat(t),\influ(t),\influhat(t)$ are defined and conditions for 
$\alphabar$ and $\alphaul$ hold as in Theorem \ref{thm:influ_error}.
Let $\varepsilon>0$ and $c\in(0,\alphaul)$, then
$|\influhat(t)-\influ(t)|/\influ(t)\leq \varepsilon e^{-ct}$ as long as the estimated $\qhat_k(t)$ satisfies
\begin{equation}\label{eq:q_bound}
\frac{|\qhat_k(t)-q_k(t)|}{q_k(t)}\leq 
\frac{\alphaul-c}{K \qbar_k}+\frac{\log\varepsilon - K\log2-\log c_K(t)}{K\qbar_k t}=C_k-O\left(\frac{\log t}{t}\right)
\end{equation}
for each $k=0,1,\dots,K-1$, where $\qbar_k:=\alphabar k \min\{\dbar,K-k\}$ and $C_k:=(\alphaul-c)/K\qbar_k$.
\end{corollary}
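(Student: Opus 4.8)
The plan is to read Corollary \ref{cor:influ_error} as an inversion of Theorem \ref{thm:influ_error}: rather than propagating a fixed per-step error $\epsilon$ forward into the influence, I fix the desired output tolerance $\varepsilon e^{-ct}$ and solve backward for the admissible input error $\delta_k(t)=|\qhat_k(t)-q_k(t)|/q_k(t)$. Concretely, for each $t$ I would select the theorem's parameter $\epsilon=\epsilon(t)$ so that the right-hand side of \eqref{eq:influ_error} does not exceed the target, and then translate the hypothesis \eqref{eq:delta_bound} into the explicit bound \eqref{eq:q_bound}.

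First I would discard the trivial branch of the minimum in \eqref{eq:influ_error} and work with $|\influhat(t)-\influ(t)|/\influ(t)\leq [(1+\epsilon)^K-1]\,c_K(t)e^{-\alphaul t}$. The crucial simplification is the crude estimate $(1+\epsilon)^K-1\leq (2+\epsilon)^K$, which replaces the awkward $-1$ by a clean $K$-th power and, via $(2+\epsilon)^K=2^K(1+\epsilon/2)^K$, is exactly what produces the $K\log 2$ term in \eqref{eq:q_bound}. It then suffices to force $(2+\epsilon)^K c_K(t)e^{-\alphaul t}\leq \varepsilon e^{-ct}$; taking logarithms, this is $K\log(2+\epsilon)\leq \log\varepsilon+(\alphaul-c)t-\log c_K(t)$.

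Next I would match this against the admissibility condition. Since $\log(2+\epsilon)=\log 2+\log(1+\epsilon/2)$, the display above is equivalent to $\log(1+\epsilon/2)\leq \tfrac{1}{K}[\log\varepsilon+(\alphaul-c)t-K\log 2-\log c_K(t)]$. Feeding this into the first term of \eqref{eq:delta_bound}, namely $\delta_k(t)\leq \log(1+\epsilon/2)/(\qbar_k t)$ with $\qbar_k=\alphabar k\min\{\dbar,K-k\}$, and choosing $\epsilon=\epsilon(t)$ to realize equality, reproduces \eqref{eq:q_bound} term by term: the $(\alphaul-c)t$ contribution yields the constant $C_k=(\alphaul-c)/(K\qbar_k)$, while the remaining logarithmic terms give the $O(\log t/t)$ correction (using that $c_K(t)$ is a degree-$(K-1)$ polynomial in $t$, so $\log c_K(t)=O(\log t)$). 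Thus assuming \eqref{eq:q_bound} for every $k$ forces $\delta_k(t)$ below $\log(1+\epsilon/2)/(\qbar_k t)$, so \eqref{eq:delta_bound} holds with this $\epsilon$, Theorem \ref{thm:influ_error} applies, and the chain of inequalities delivers $|\influhat(t)-\influ(t)|/\influ(t)\leq \varepsilon e^{-ct}$.

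The step I expect to be delicate is the admissibility of the chosen $\epsilon(t)$, which is essentially bookkeeping around the clean algebra. Two points need care. The condition \eqref{eq:delta_bound} is a minimum of two terms, and \eqref{eq:q_bound} only encodes the first; I would argue that for $t$ large the first term is the binding one, so that $\delta_k(t)\leq \epsilon/(2+\epsilon)$ holds automatically, which is precisely why the statement is phrased asymptotically and for sufficiently large $t$. Second, solving the matching equation yields an $\epsilon(t)$ that grows with $t$ and eventually leaves the admissible range $\epsilon\in(0,1)$ of the theorem; but since the theorem's bound decays like $e^{-\alphaul t}$, strictly faster than the target $e^{-ct}$ because $c<\alphaul$, in that regime I would instead invoke the theorem with any fixed admissible $\epsilon<1$ and note that \eqref{eq:q_bound}, whose right-hand side tends to the positive constant $C_k$, is then more than satisfied. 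Apart from this reconciliation the argument is exactly the logarithmic inversion of Theorem \ref{thm:influ_error} described above.
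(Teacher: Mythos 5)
Your main derivation is essentially the paper's own proof. The paper also inverts Theorem \ref{thm:influ_error} by choosing $\epsilon(t)$ to match the target exactly, via $[(1+2\epsilon(t))^K-1]c_K(t)e^{-\alphaul t}=\varepsilon e^{-ct}$, and then reduces the admissibility requirement $\qbar_k t\,\delta_k(t)\leq\log(1+\epsilon(t))$ to \eqref{eq:q_bound} through the relaxations $\log(2e^{x}-1)\leq\log(2e^{x})$ and $\log(u+1)\geq\log u$. Your up-front estimate $(1+\epsilon)^K-1\leq(2+\epsilon)^K=2^K(1+\epsilon/2)^K$ is precisely the same relaxation performed at the other end of the computation, and it yields the identical final inequality, including the provenance of the $K\log 2$ term.

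One caveat concerns the two side issues you flag, both of which the paper's proof silently ignores: your treatment of the min's second branch is fine, but your patch for the regime where the matching $\epsilon(t)$ exceeds $1$ runs backwards. There you propose to invoke Theorem \ref{thm:influ_error} with a fixed admissible $\epsilon<1$ and assert that \eqref{eq:q_bound} is ``then more than satisfied''; however, the needed implication goes the other way. The corollary's hypothesis is \eqref{eq:q_bound}, whose right-hand side tends to the positive constant $C_k$, whereas \eqref{eq:delta_bound} with fixed $\epsilon$ demands $\delta_k(t)\leq\log(1+\epsilon/2)/(\qbar_k t)\to0$. Thus a perturbation with, say, $\delta_k(t)=C_k/2$ satisfies \eqref{eq:q_bound} for all large $t$ yet violates \eqref{eq:delta_bound} for every fixed $\epsilon$, so the theorem cannot be applied in that regime by your route. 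The clean repair is different: observe that the proofs of Lemma \ref{lemma:rho_error} and Theorem \ref{thm:influ_error} never genuinely use $\epsilon<1$ (the hypothesis $\delta_k(t)\leq\epsilon/(2+\epsilon)$ already caps $\delta_k(t)<1$, and the telescoping ratio bounds hold for any $\epsilon>0$), so one may keep the growing $\epsilon(t)$ throughout; then the second branch of the min in \eqref{eq:delta_bound} tends to $1$ while the first tends to $C_k<1/K$ (for $k\geq1$, since $\qbar_k\geq\alphabar$ and $\alphaul-c<\alphabar$), so the first branch is eventually the binding one, exactly as you argued for the other issue. With that substitution your argument is complete and coincides with the paper's.
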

\begin{proof}
By Theorem \ref{thm:influ_error} and the bound of error $\delta_k(t)$ in \eqref{eq:delta_bound}, 
we can attain $|\influhat(t)-\influ(t)|/\influ(t)\leq \varepsilon e^{-ct}$ as long as
$\delta_k(t)$ satisfies $\qbar_k t \delta_k(t) \leq \log(1+\epsilon(t))$ for some $\epsilon(t)$
such that $[(1+2\epsilon(t))^K-1]c_K(t)e^{-\alphaul t}= \varepsilon e^{-ct}$. To this end, we need
$\log(2e^{\qbar_k t \delta_k(t)}-1)\leq\frac{1}{K}\log(\frac{\varepsilon e^{(\alphaul-c)t}}{c_K(t)}+1)$,
to guarantee which it suffices to have $ \log(2e^{\qbar_k t \delta_k(t)}) \leq \frac{1}{K}\log(\frac{\varepsilon e^{(\alphaul-c)t}}{c_K(t)})$,
i.e., \eqref{eq:q_bound}.
%$[(1+\epsilon)^K-1]c_K(t)e^{-\alphaul t}\leq\varepsilon$, which is equivalent to
%$\log(1+\epsilon)\leq \frac{1}{K}\log\left[1+\varepsilon e^{\alphaul t}/c_{K-1}(t)\right]$.
%To deduce the required bound on $\qhat_k(t)$, by \eqref{eq:delta_bound}, we only need that
%%
%\begin{equation*}
%\frac{|\qhat_k(t)-q_k(t)|}{q_k(t)}\leq \frac{\log(1+\epsilon)}{\alphabar K\min\{\dbar,K-k\}}
%\leq\frac{\log\left[1+\varepsilon e^{\alphaul t}/c_{K-1}(t)\right]}{\alphabar K^2t\min\{\dbar,K-k\}}
%=O(\log\varepsilon)
%\end{equation*}
\end{proof}

\section{Experimental results} 
\label{sec:results}
We first apply the proposed method to networks (with various sizes and parameters) 
generated by four 
models commonly used in social/biological/contact networking applications:
Erd\H{o}s-R\'{e}nyi's random, small-world, scale-free, and 
Kronecker network\footnote{Code for generating 
Kronecker network is at \url{https://github.com/snap-stanford/snap/tree/master/examples/krongen}
and other three using CONTEST package at 
\url{http://www.mathstat.strath.ac.uk/outreach/contest/toolbox.html}}.
%The size of all these four types of networks
%is set to $K=1024$, and the average out degrees per node are 
%$32$, $32$, and $1.56$, respectively. 
Activation rates $\{\alpha_{ij}\}$ are drawn from interval $(0,1)$ uniformly
to simulate the inhomogeneous propagation rates across edges.
Unless otherwise noted, we only consider node-to-node activations
in propagations without self-activation and recovery.
%Note that the Kronecker network is very sparse for ConTinEst
%to be computationally feasible as it in general requires extensive amount of samplings.
In all cases except those in Fig.\ \ref{fig:tree}, exact solutions for influence
are computationally infeasible due to the
large size and heterogeneous transmission rates between nodes,
we therefore use enough Monte Carlo Markov chain (MCMC) simulated cascades (5000 cascades for each network)
to compute the ground truth density $\rho(t)$ and influence $\influ(t)$.

In Fig.\,\ref{fig:tree}, we show the performance 
of our method based on Fokker-Planck equation in Section \ref{subsec:rates} 
using \texttt{FPE-dist} and \texttt{FPE-tree}. The NIMFA (N-interwined 
mean field approximation) is a state-of-the-art method that uses mean-field theory to
obtain a system of differential equations to calculate the probability $p_i(t)$ (node $i$ gets activated at time $t$)
\cite{Van-Mieghem:2013a,Van-Mieghem:2009a}, and estimates the influence by $\sum_i p_i(t)$.
Note that we take a completely different approach to calculate the probability $\rho_k(t)$
for each possible influence size $k$ and estimate the influence by $\sum_k k\rho_k(t)$.
For the influence prediction test, we find that our approach appears to be more accurate 
as shown in Fig.\,\ref{fig:tree}, especially \texttt{FPE-tree} which matches ground truth (MCMC)
very closely (but at the expense of higher computational cost to estimate transition rates $q_k(t)$). 
The \texttt{FPE-dist} also provides reasonably accurate solution but requires much lower computational cost, hence we only
use this version in other tests with large networks.
Note that NIMFA requires solving a nonlinear system of $K$ differential equations numerically and hence 
has the same order of computation complexity as our approach.
\begin{figure}
\centering
\includegraphics[width=.28\textwidth]{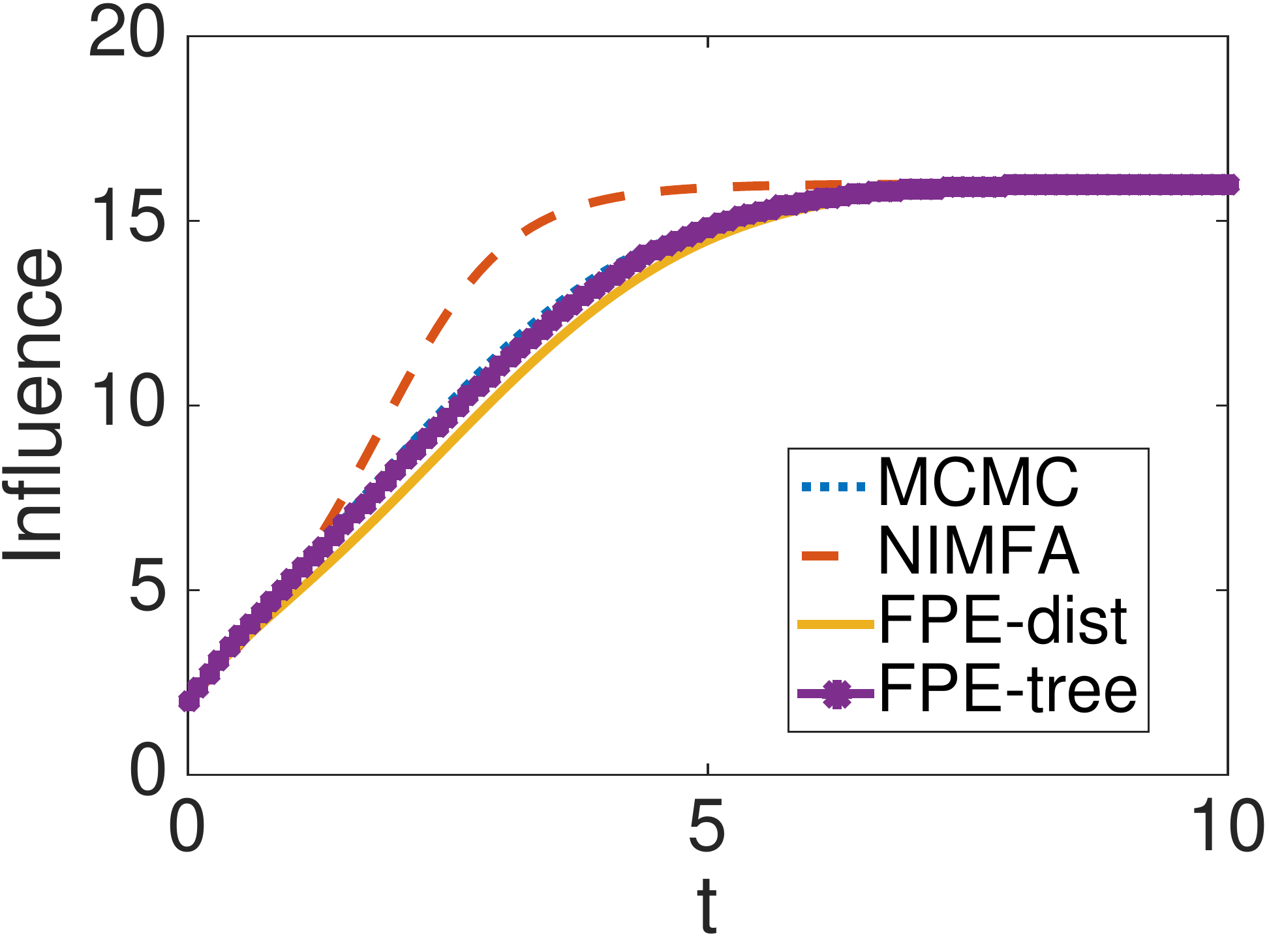}
\includegraphics[width=.28\textwidth]{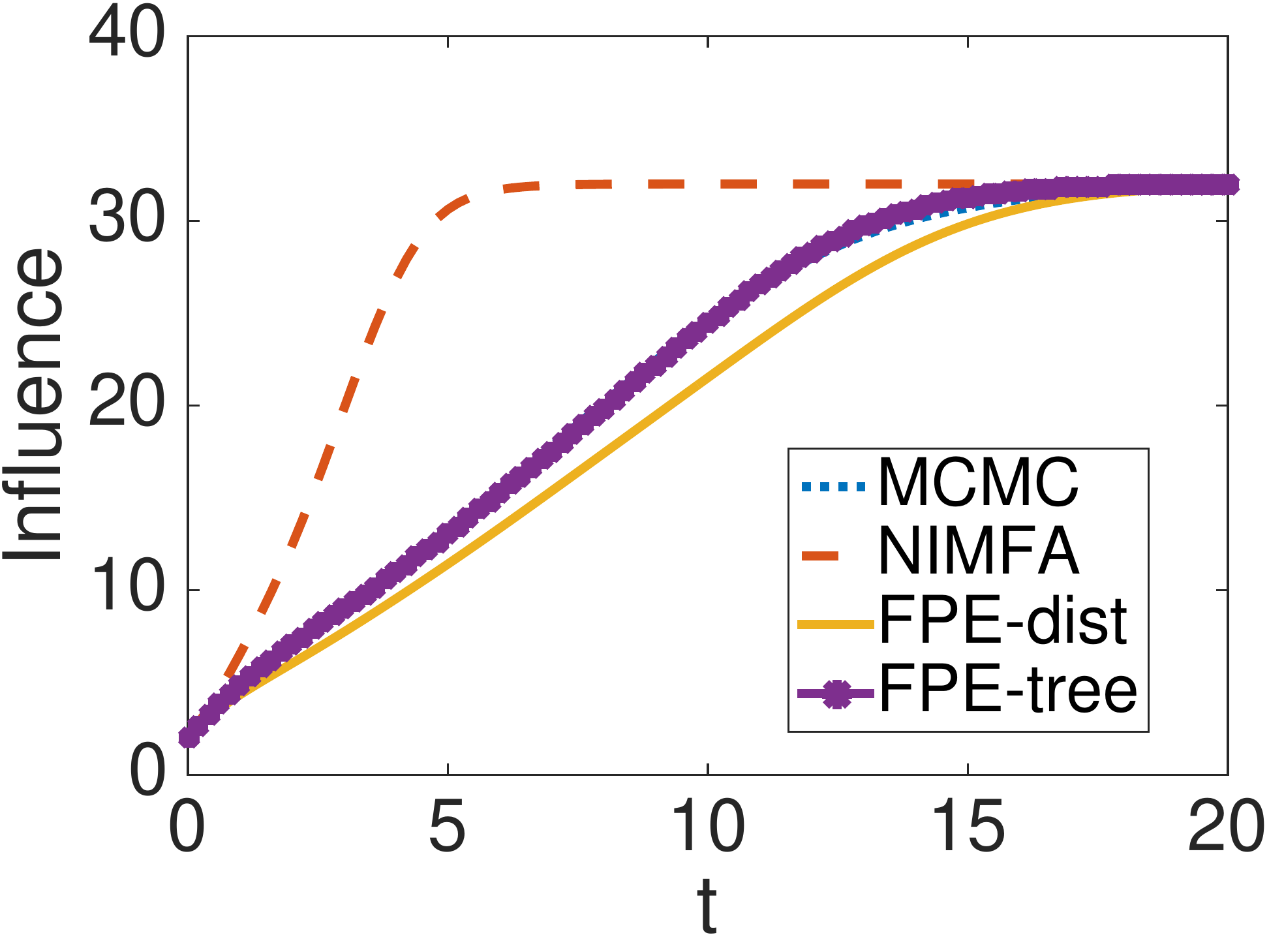}
\includegraphics[width=.28\textwidth]{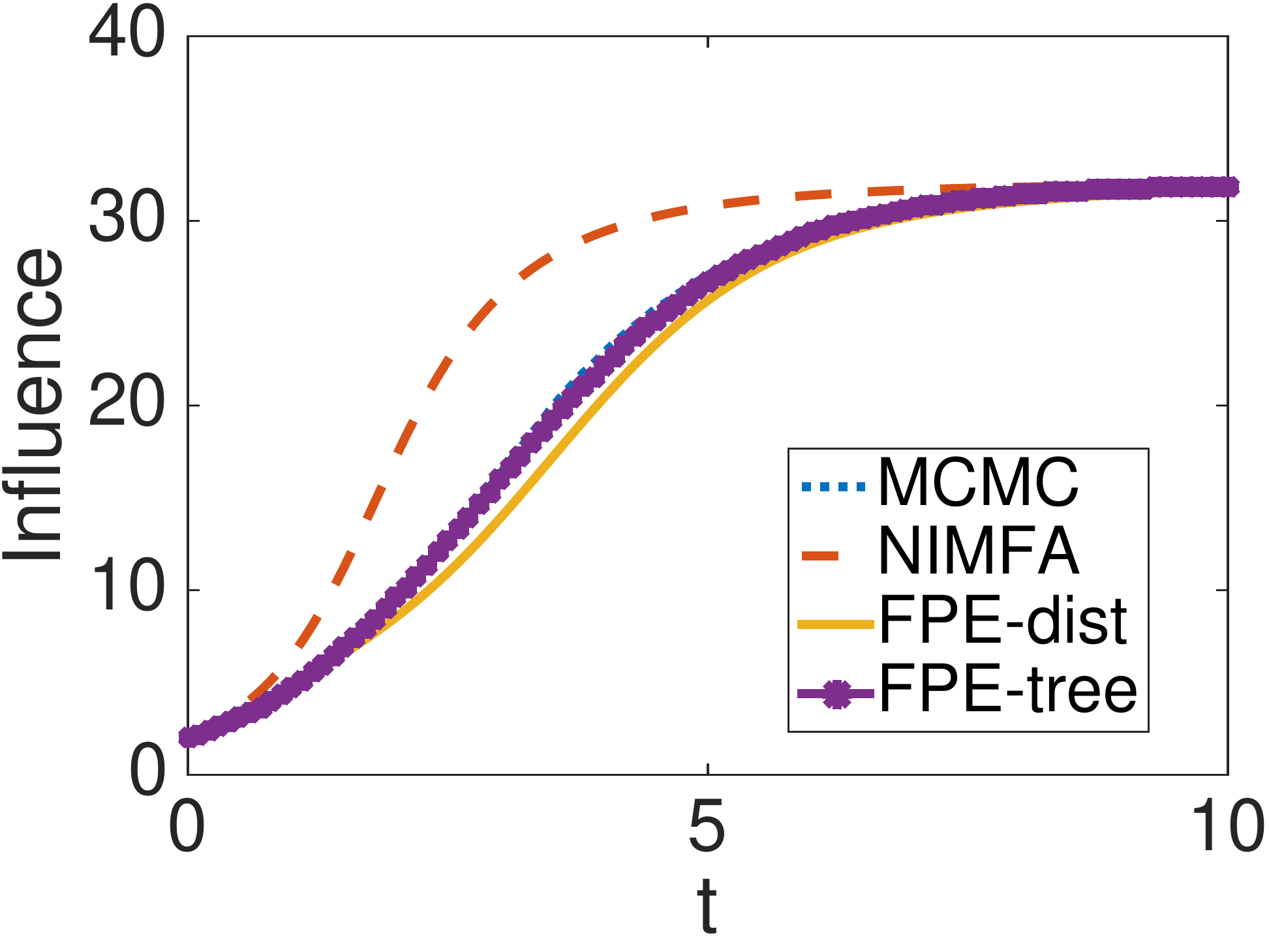}
\caption{Influence prediction on small sized network (when our matlab implementation 
of \texttt{FPE-dist} still takes short time in computing). 
\textbf{Left two}: Erd\H{o}s-R\'{e}nyi's network of size $K=16,32$. \textbf{Right}: Small-world network $K=32$.
Average degree $(1/K)\sum_i|N_i^{\mathrm{out}}|=4$.}
\label{fig:tree}
\end{figure}

In Fig.\,\ref{fig:varnet}, we show the influence prediction result on networks of much larger
size $K=1024$. Despite of very different network structures, \texttt{FPE-dist} provides faithful
influence prediction and matches ground truth (MCMC) closely.
\begin{figure}
\centering
\includegraphics[width=.28\textwidth]{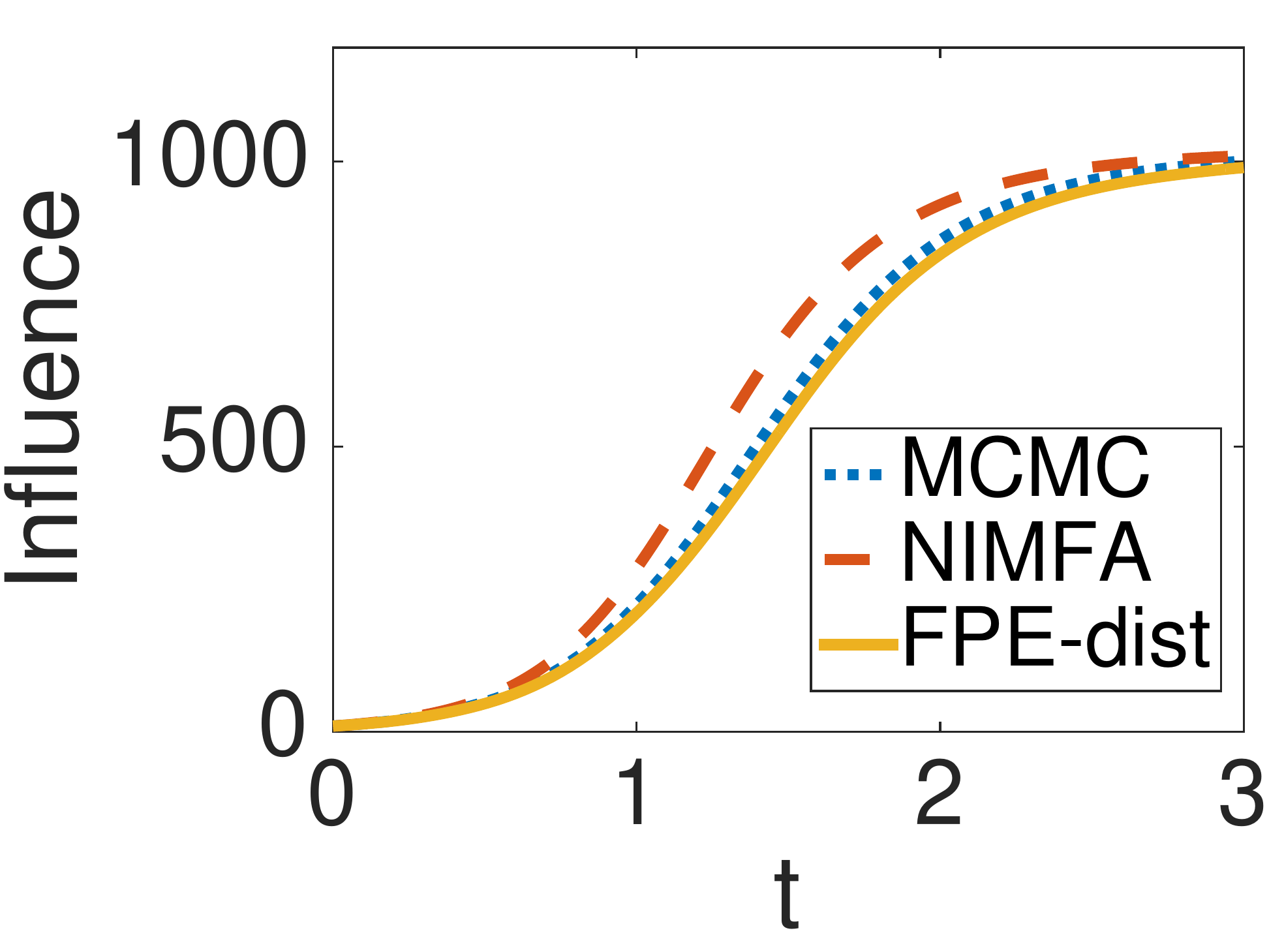}
\includegraphics[width=.28\textwidth]{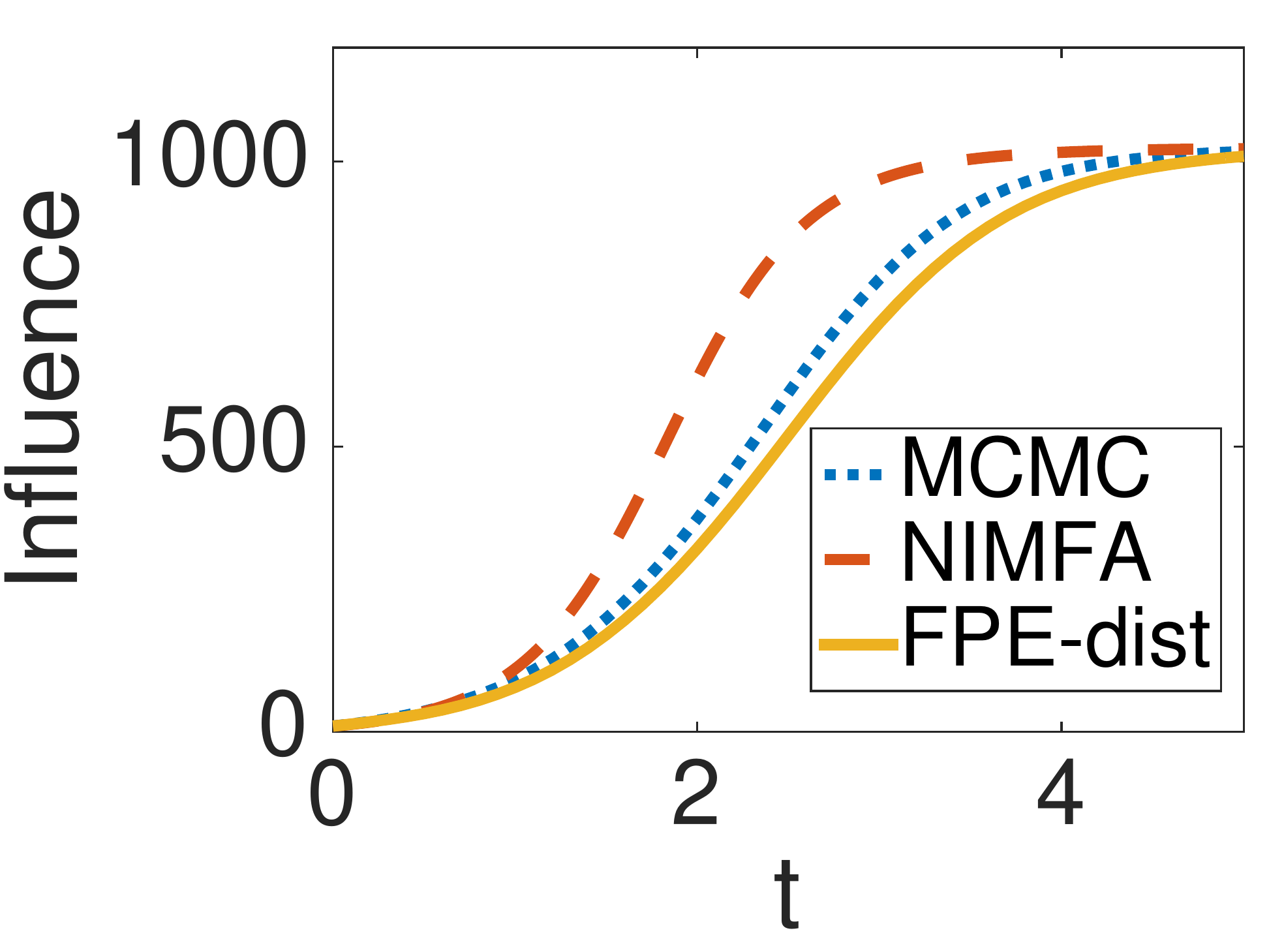}
\includegraphics[width=.28\textwidth]{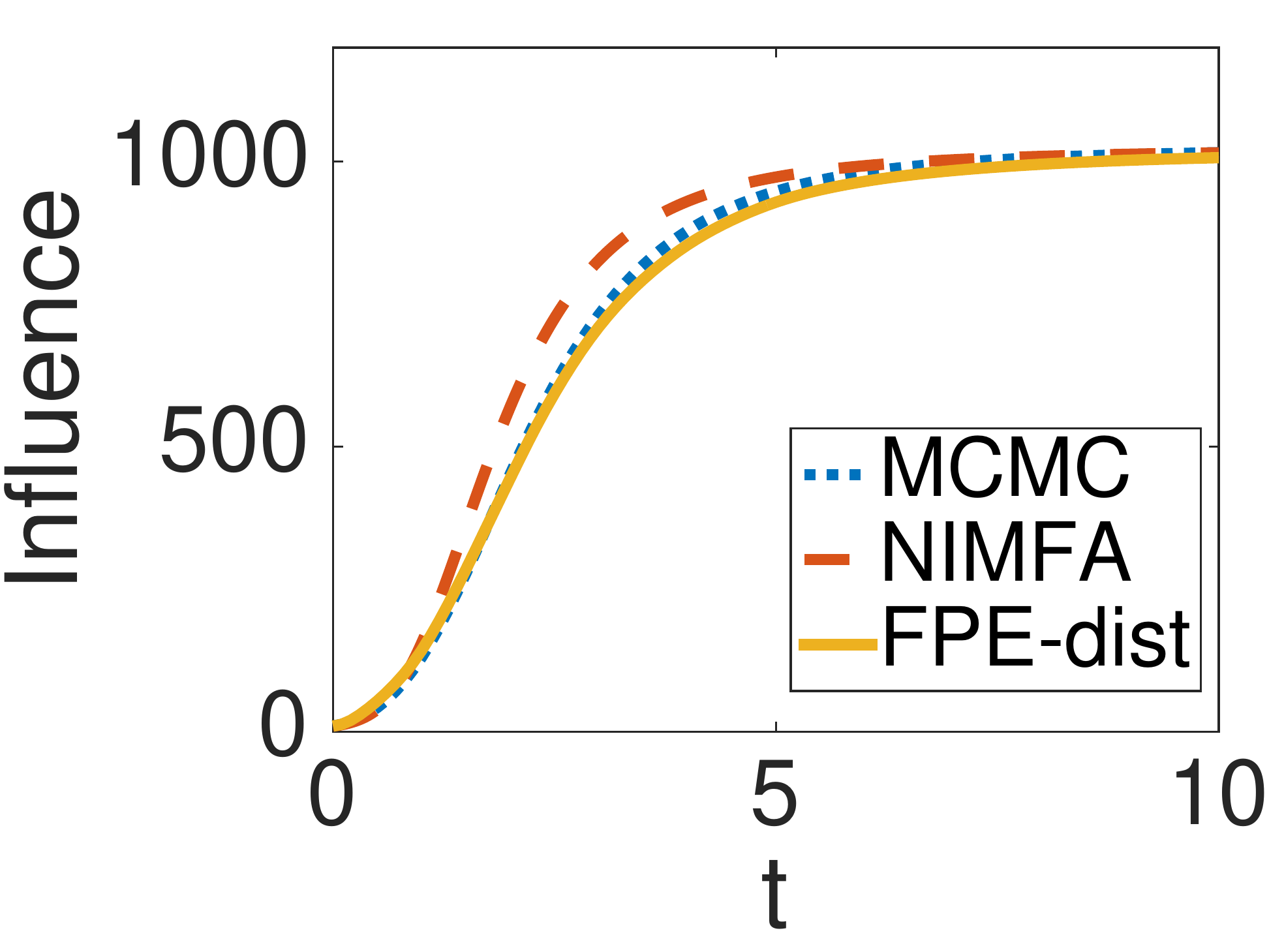}
\caption{Influence prediction on \textbf{Left}: Erd\H{o}s-R\'{e}nyi's network,
\textbf{Middle}: small-world network, 
and \textbf{Right}: scale-free network. All have size $K=1024$ and 
average degree are $(1/K)\sum_i|N_i^{\mathrm{out}}|=8,6,6$ respectively.}
\label{fig:varnet}
\end{figure}

Influence prediction problem is considered very challenging computationally,
especially for dense networks. In Fig.\,\ref{fig:robust} we test \texttt{FPE-dist}
on very dense Erd\H{o}s-R\'{e}nyi's random networks of size $K=1024$
where average degrees are $(1/K)\sum_i|N_i^{\mathrm{out}}|=32$, $64$, and $128$ respectively.
%In each case, $16$ nodes are randomly selected with uniform distribution
%to form the source set.
On all of these networks, \texttt{FPE-dist} returns highly accurate prediction of influence
which justifies its robustness.

The influence prediction problem considered in this paper, as noted in Section \ref{subsec:problem_description}, 
is significantly different from those for dynamical processes on
on networks in statistical physics. Our network is deterministically heterogenous, 
meaning that $G=(V,E)$ and $\alpha_{ij}$ on all edges are 
given, and they play critical roles in propagations. Therefore, the identities of nodes
in source set $S$ matter greatly (in contrary the nodes in a network are not distinguishable
in most statistical physics problems)
which leads to many important follow-up questions such as
influence maximization (e.g., finding the source set $S$ that solves 
$\max_{|S|\leq k_0}\influ(t;S)$ for some prescribed size $k_0\in\mathbb{N}$ and
time $t$)
\cite{Cohen:2014a,Gomez-Rodriguez:2012c,Kempe:2003a,Wang:2012a}
and outbreak detection \cite{Cui:2013a,Leskovec:2007b}. To see the critical role of source set $S$, 
we apply \texttt{FPE-dist} to three different choices of source set $S_1,S_2,S_3$ all with $|S_i|=10$
and show the prediction results in the middle panel of Fig.\,\ref{fig:robust}.
Here $S_1$ is the choice obtained by the influence maximization function 
from ConTinEst code \cite{Du:2013a}, $S_2$ consists of the ten nodes with largest degrees in $G$,
and $S_3$ contains ten nodes randomly chosen from the network.
The plots clearly show different influences of these sources sets $S_i$'s due to the
deterministically heterogeneous structure of the network.
Nevertheless, \texttt{FPE-dist} has very robust performance and matches 
the ground truths (MCMC) closely in every case. 

We also compare \texttt{FPE-dist} to the state-of-the-arts learning-based 
ConTinEst algorithm \cite{Du:2013a}. 
The network data and its implementation are obtained from the ConTinEst package published by 
its authors\footnote{Data and code available at \url{http://www.cc.gatech.edu/~ndu8/DuSonZhaMan-NIPS-2013.html}.}.
ConTinEst is a state-of-the-arts learning-based algorithm that uses parametrized kernel
functions to approximate the coverage of each node based on 
Monte Carlo samplings. 
The result is shown in the right panel of Fig.\,\ref{fig:robust}.
From this test, we see that \texttt{FPE-dist} is very accurate as it matches the
ground truth (MCMC) much better. Moreover, ConTinEst takes excessively long
time to estimate influence for denser networks as those in the left panel of 
Fig.\,\ref{fig:robust}, while \texttt{FPE-dist} still works robustly without suffering the issue at all.
Note that comprehensive comparison of 
ConTinEst with several other existing methods is reported in \cite{Du:2013a}, 
from which significant improvement in accuracy of the proposed method \texttt{FPE-dist}
can be projected.
\begin{figure}
\centering
\includegraphics[width=.28\textwidth]{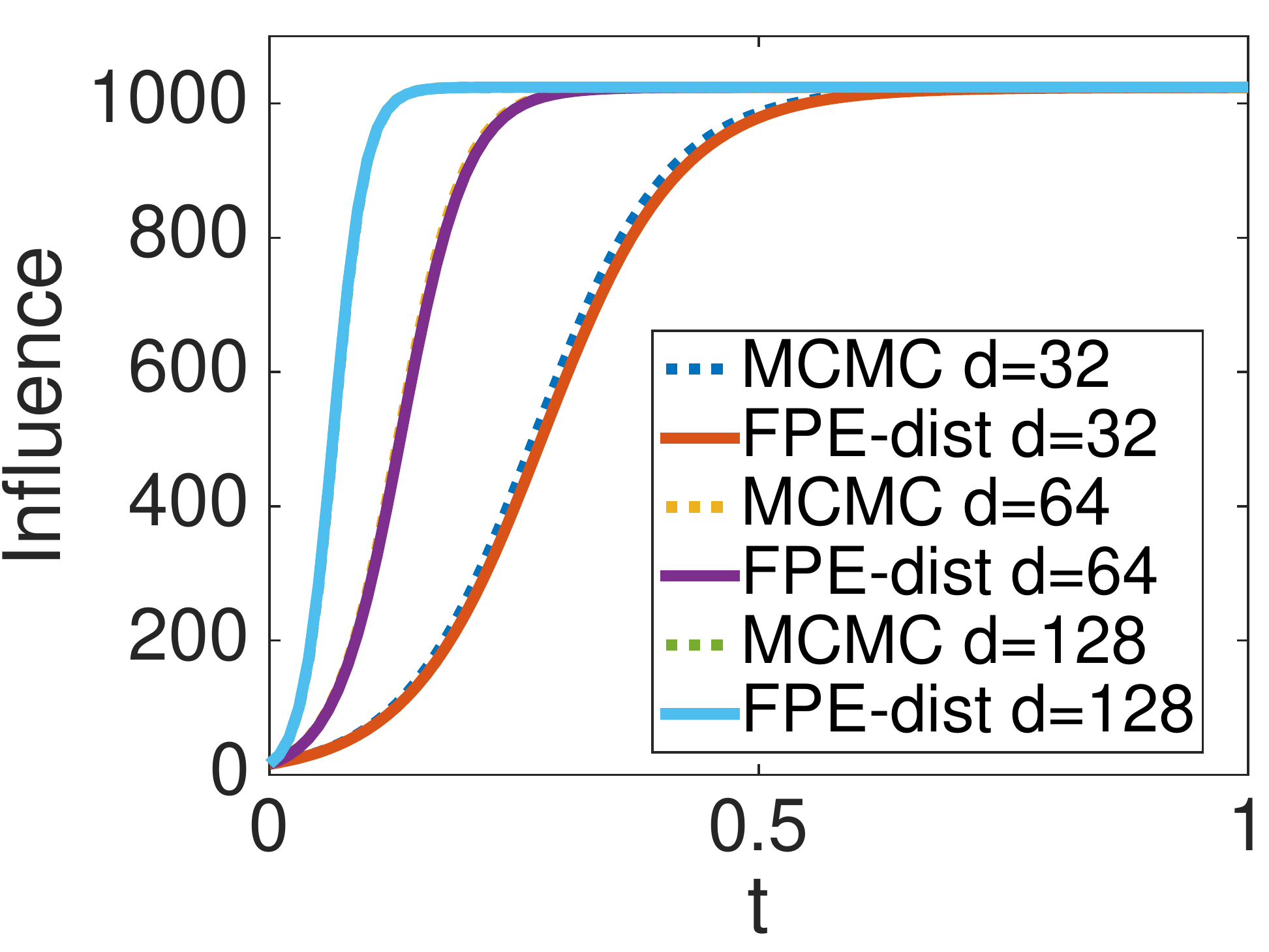}
\includegraphics[width=.28\textwidth]{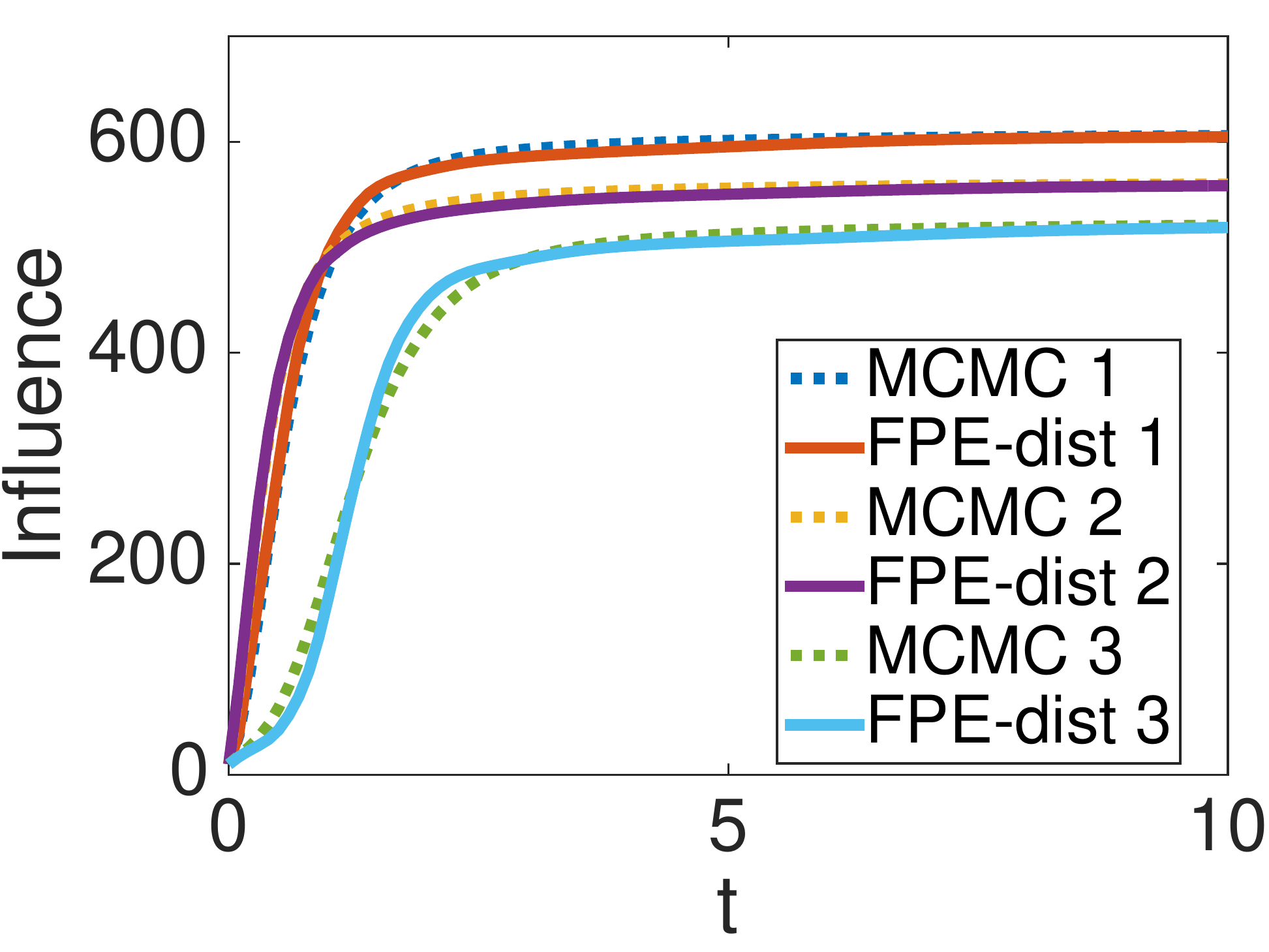}
\includegraphics[width=.28\textwidth]{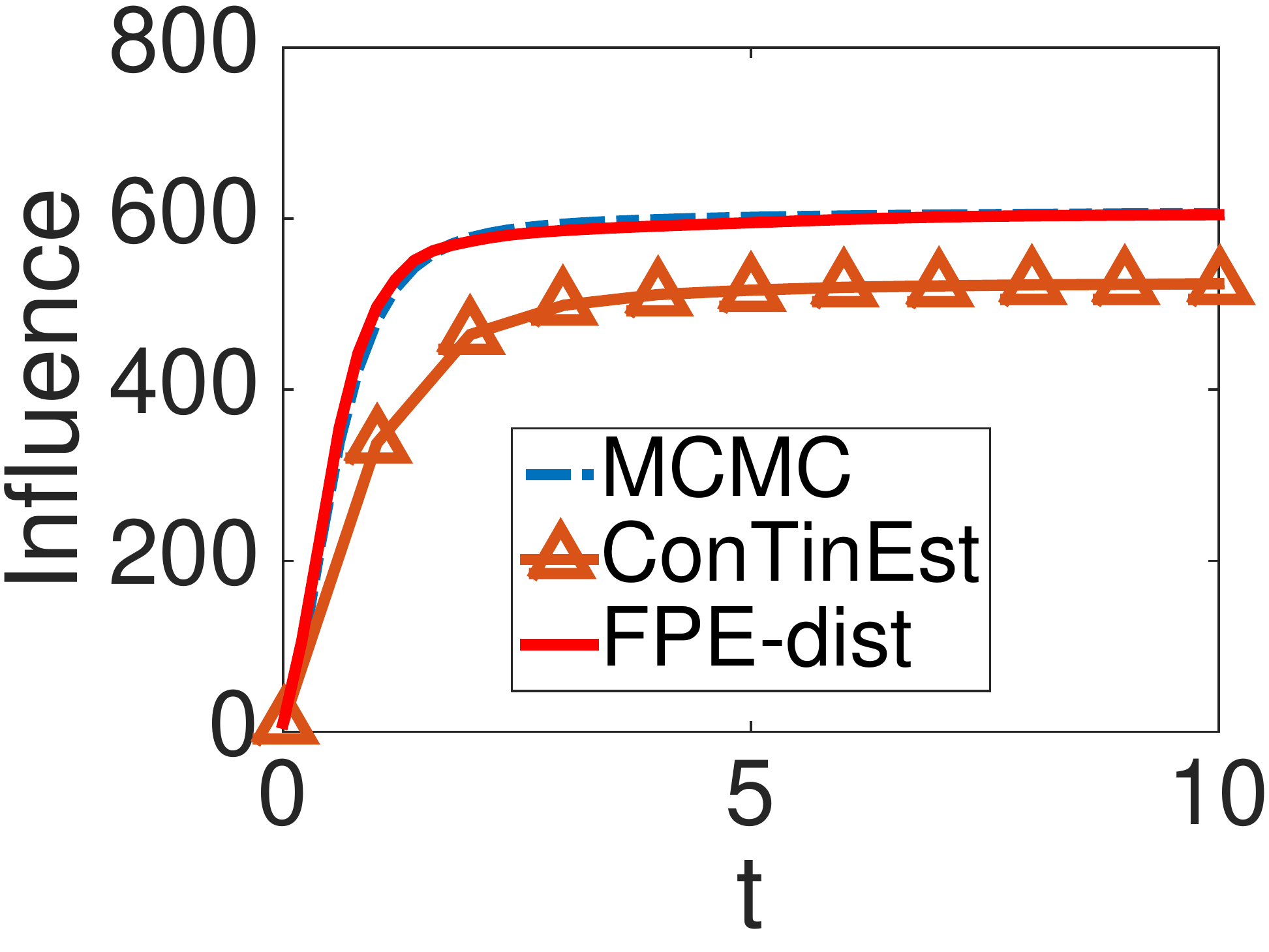}
\caption{\textbf{Left}: Influence prediction on dense Erd\H{o}s-R\'{e}nyi's random network
with $K=1024$ and $(1/K)\sum_i|N_i^{\mathrm{out}}|=32,64,128$ respectively.
\textbf{Middle}: Influence prediction on the \underline{same} Kronecker network of size $1024$ 
using three different choices of source set $S_1,S_2,S_3$ ($|S_i|=10$ in all three cases).
\textbf{Right}: Comparison with the state-of-the-arts learning-based ConTinEst method.}
\label{fig:robust}
\end{figure}

We established the relation between estimation error in $\{q_k(t)\}$ and 
the prediction error in $\influ(t)$ in Section \ref{subsec:error}.
To check this numerically, we apply \texttt{FPE-dist} to a dense Erd\H{o}s-R\'{e}nyi's network
of size $K=300$ and average degree $(1/K)\sum_i|N_i^{\mathrm{out}}|=150$ ($\alpha_{ij}$
again drawn from $(0,1)$ uniformly) with source set $S=\{1,\dots,10\}$, 
and check the estimated $q_k(t)$, $\rho_k(t)$
and $\influ(t)$ with those obtained by MCMC simulated cascades. 
Recall that \texttt{FPE-dist} uses very crude estimate of $q_k(t)$ 
by setting a constant $\qhat_k=\alpha(U_k^*)$ where $U_k^*$ 
contains the $k$ nodes of shortest distance from $S$ in Section \ref{subsec:rates}.
We first plot the $\qhat_k$ for $k=10,70,130,190$ and compare with
$q_k(t)$ given by ground truth (MCMC simulations) in the top row of
Fig.\,\ref{fig:cputime}. To this end, we observe from \eqref{eq:FPEform} that
$q_k(t)=-(\sum_{j=0}^k \rho_j(t))'/\rho_k(t)$, so we obtain $\rho_k(t)$ from MCMC simulations and 
apply finite difference to get $\rho_k'(t)$ and hence $q_k(t)$. 
Note that $q_k(t)=0$ for most $t$ because $\sum_{j=0}^k \rho'(t)$ or $\rho_k(t)$ vanish there and obtaining 
these $q_k(t)$ is unstable numerically, so the comparison is only meaningful
for $t$ where $\rho_k(t)$ is away from zero. From top row of Fig.\,\ref{fig:cputime}, we can see 
the estimated $\qhat_k$ appear to accurately captured the mean of $q_k(t)$ 
but can be quite deviated (i.e., with large $|\qhat_k(t)-q_k(t)|/q_k(t)$). However,
the densities $\rhohat_k(t)$ computed using these $\qhat_k$ are still close
to the ground truth $\rho_k(t)$, as shown by the small relative error $|\rhohat_k(t)-\rho_k(t)|/\rho_k(t)$
in the bottom leftmost panel of Fig.\,\ref{fig:cputime}.
This also yields a small relative error in influence prediction $|\influhat(t)-\influ(t)|/\influ(t)$
(second on bottom row), and close match of prediction result $\influhat(t)$ and
ground truth $\influ(t)$ (MCMC) (third on bottom row) in Fig.\,\ref{fig:cputime}.
The small errors in $\rhohat_k(t)$ and $\influhat(t)$ in our numerical tests suggest that the 
theoretical bound on the estimation error in $q_k(t)$ in \eqref{eq:delta_bound} may be further relaxed without
degrading solution quality.
\begin{figure}
\centering
\includegraphics[width=.24\textwidth]{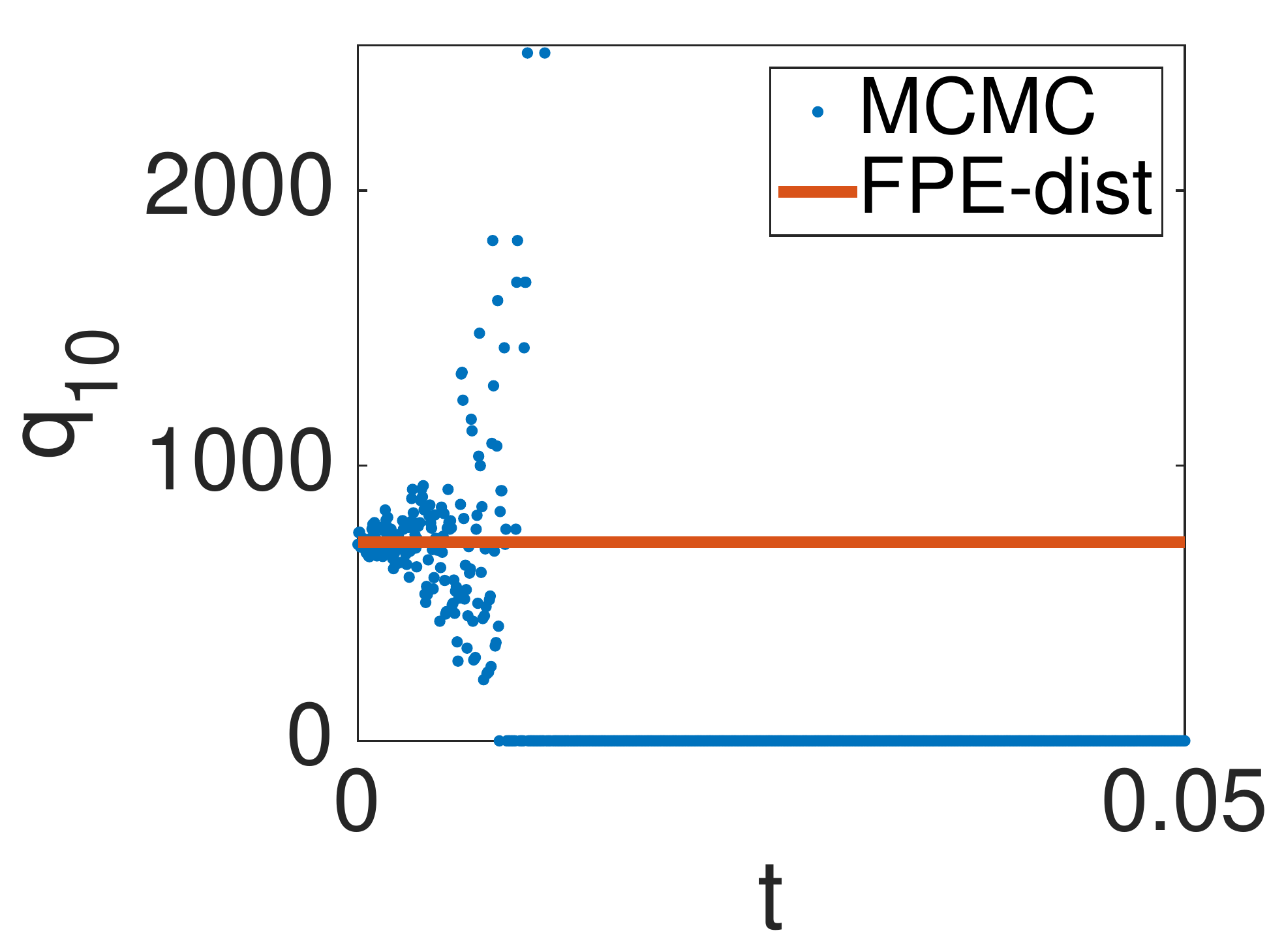}
\includegraphics[width=.24\textwidth]{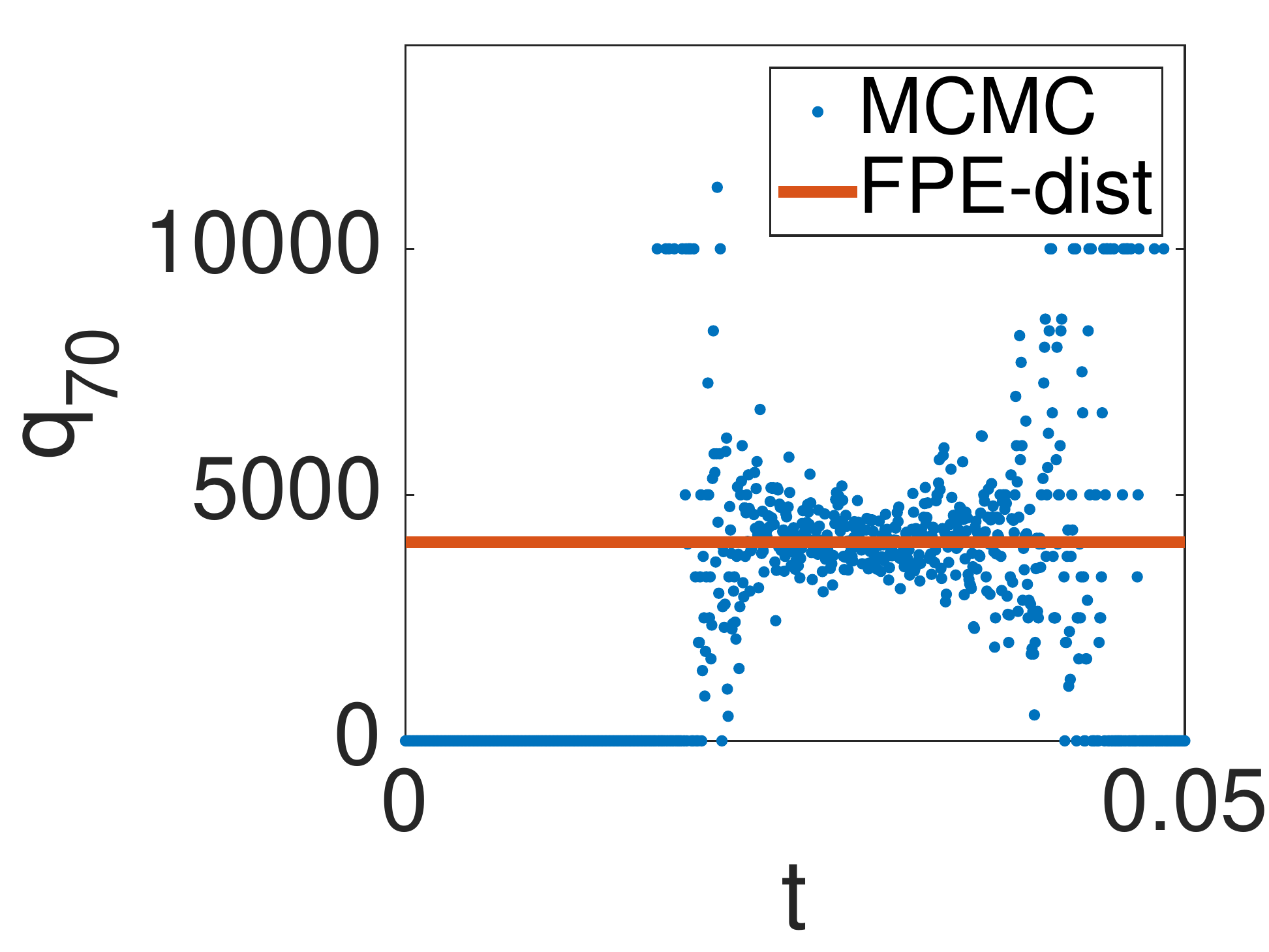}
\includegraphics[width=.24\textwidth]{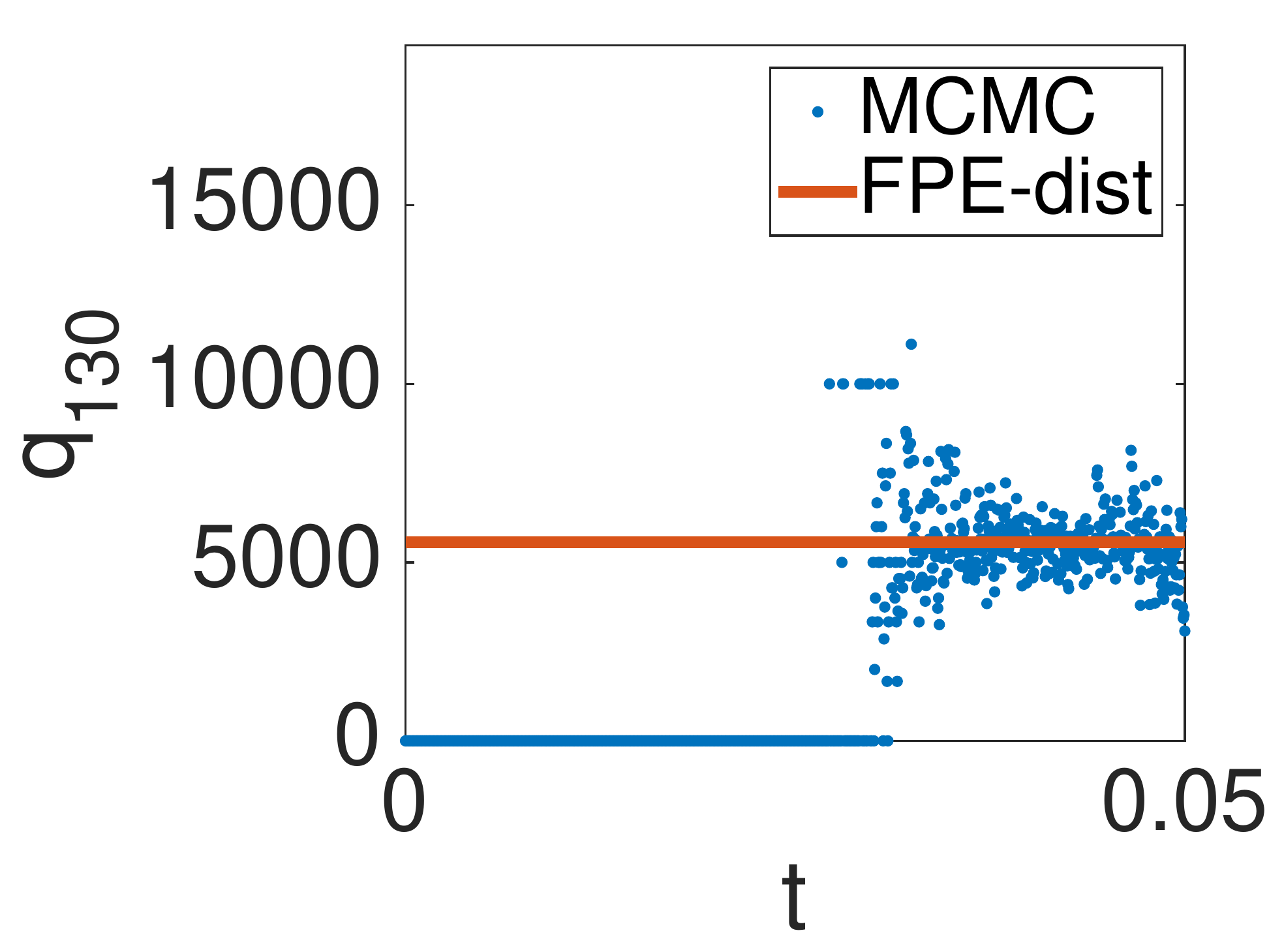}
\includegraphics[width=.24\textwidth]{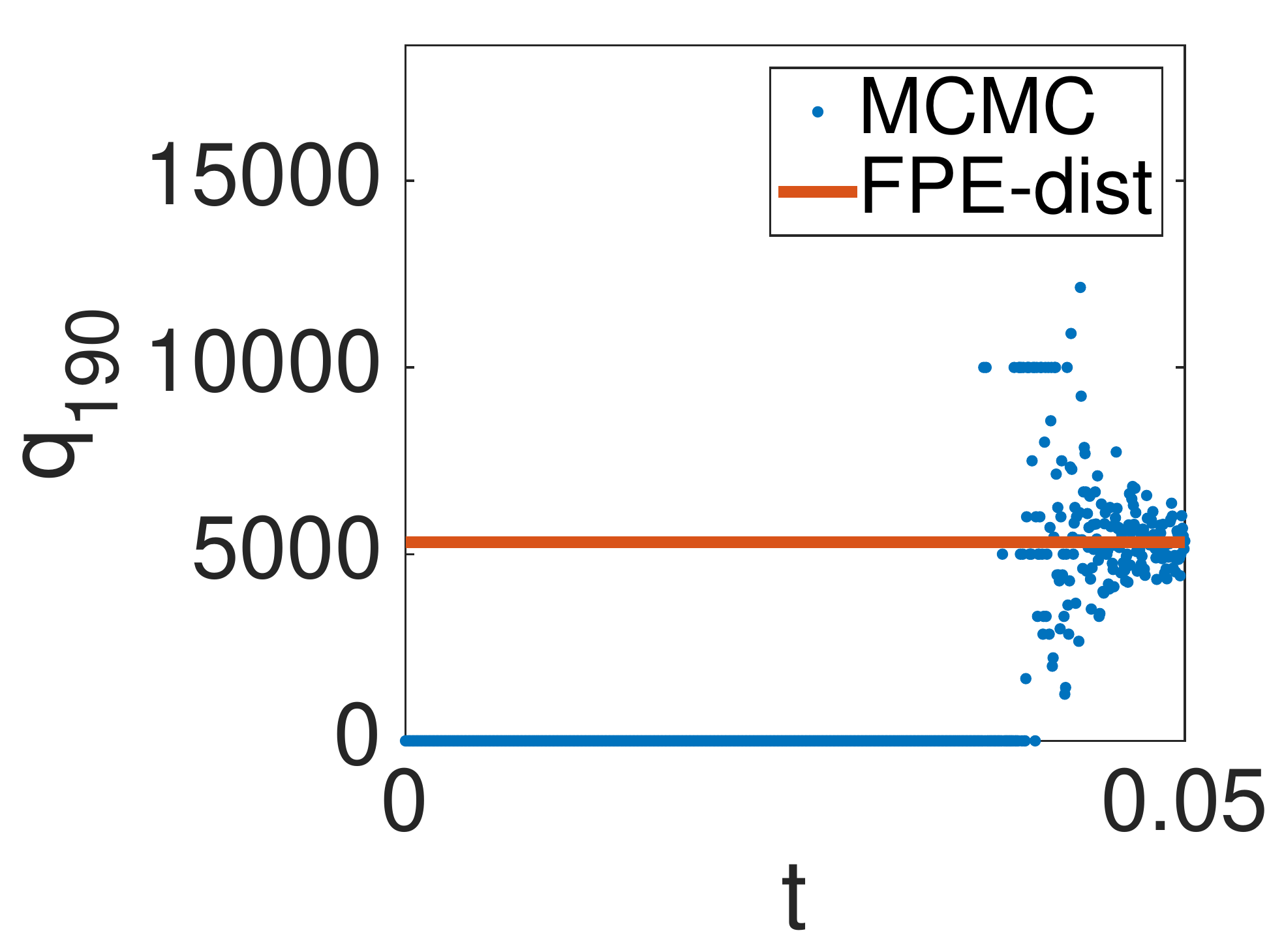}
\includegraphics[width=.24\textwidth]{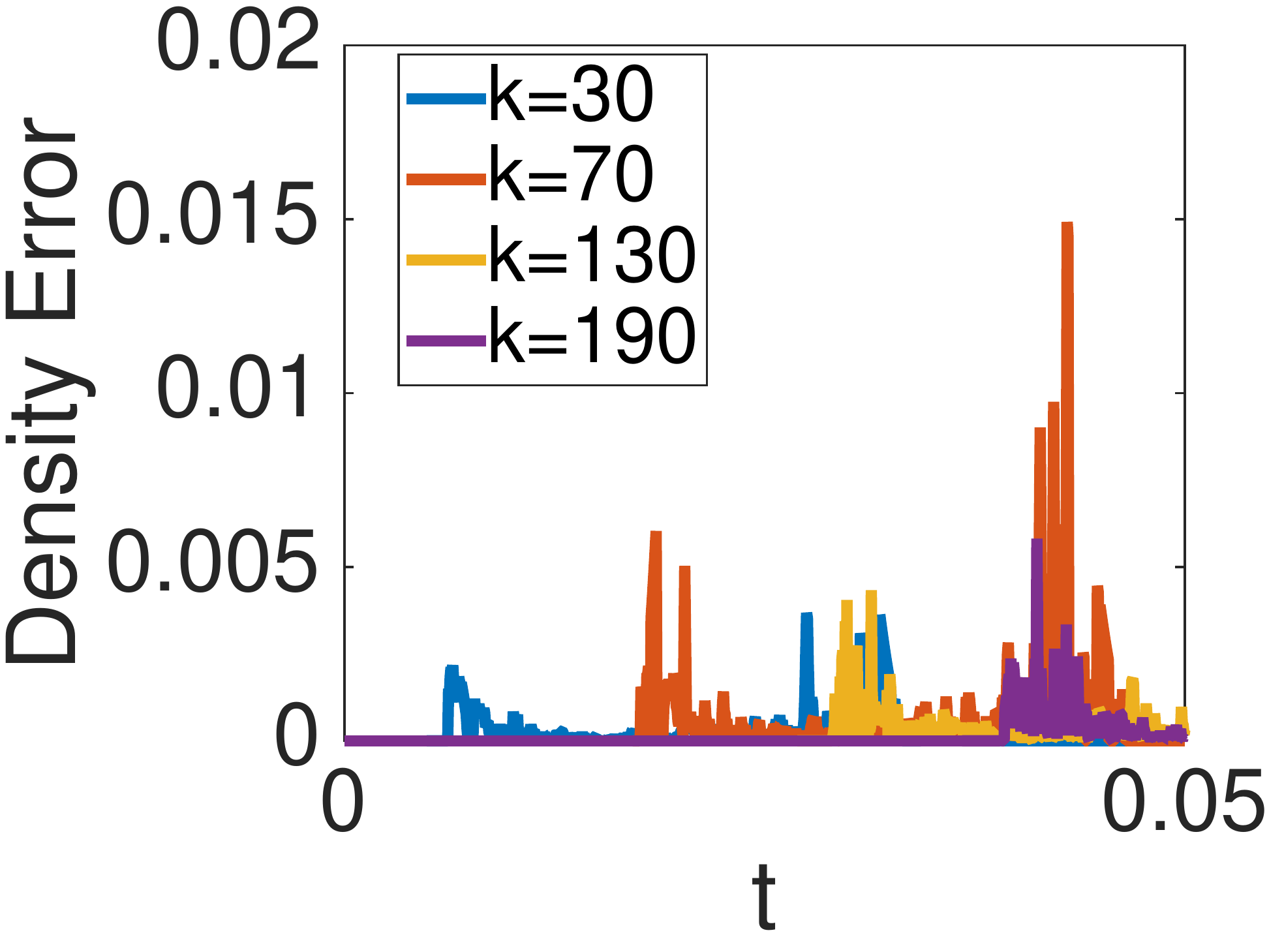}
\includegraphics[width=.24\textwidth]{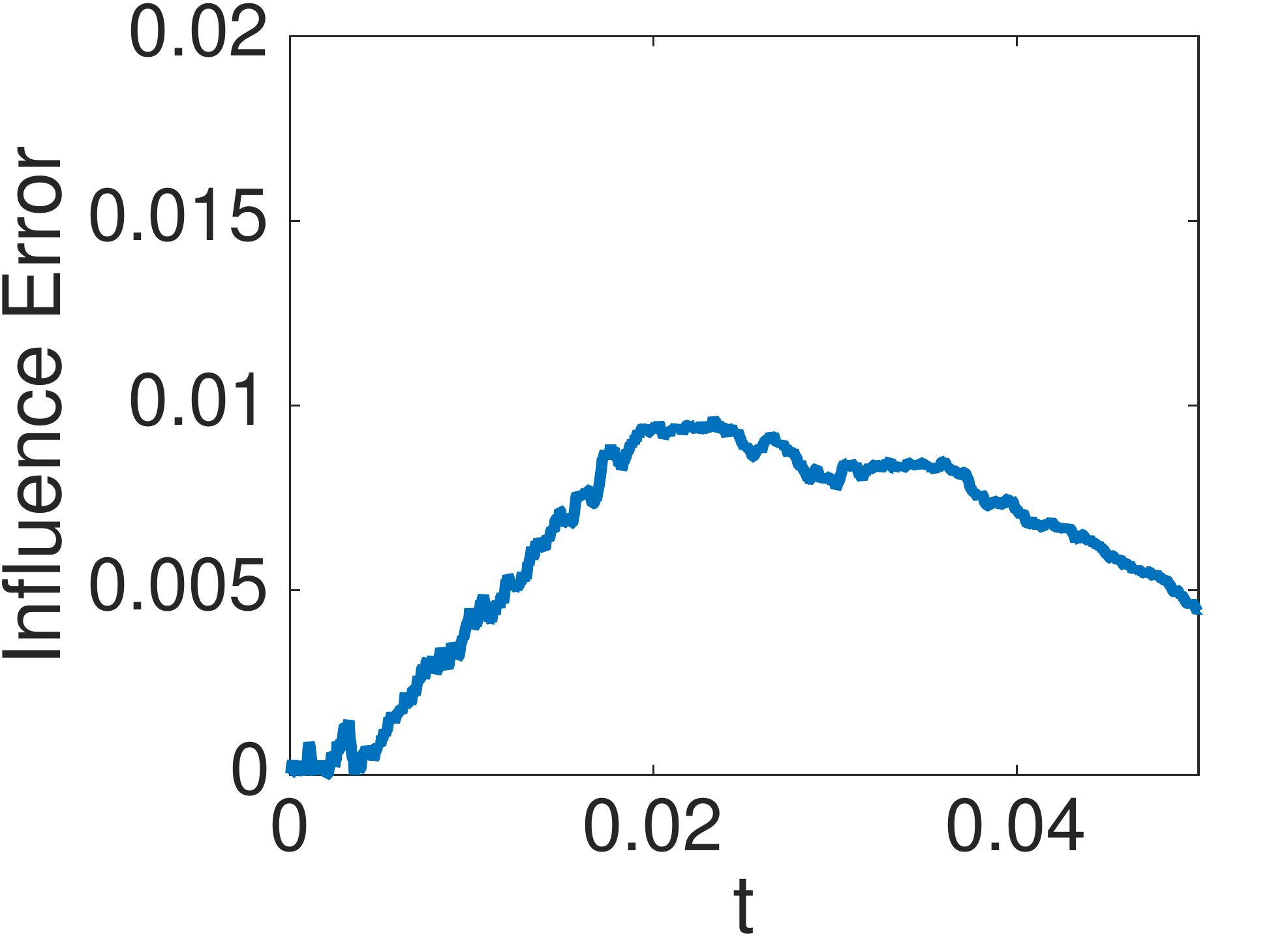}
\includegraphics[width=.24\textwidth]{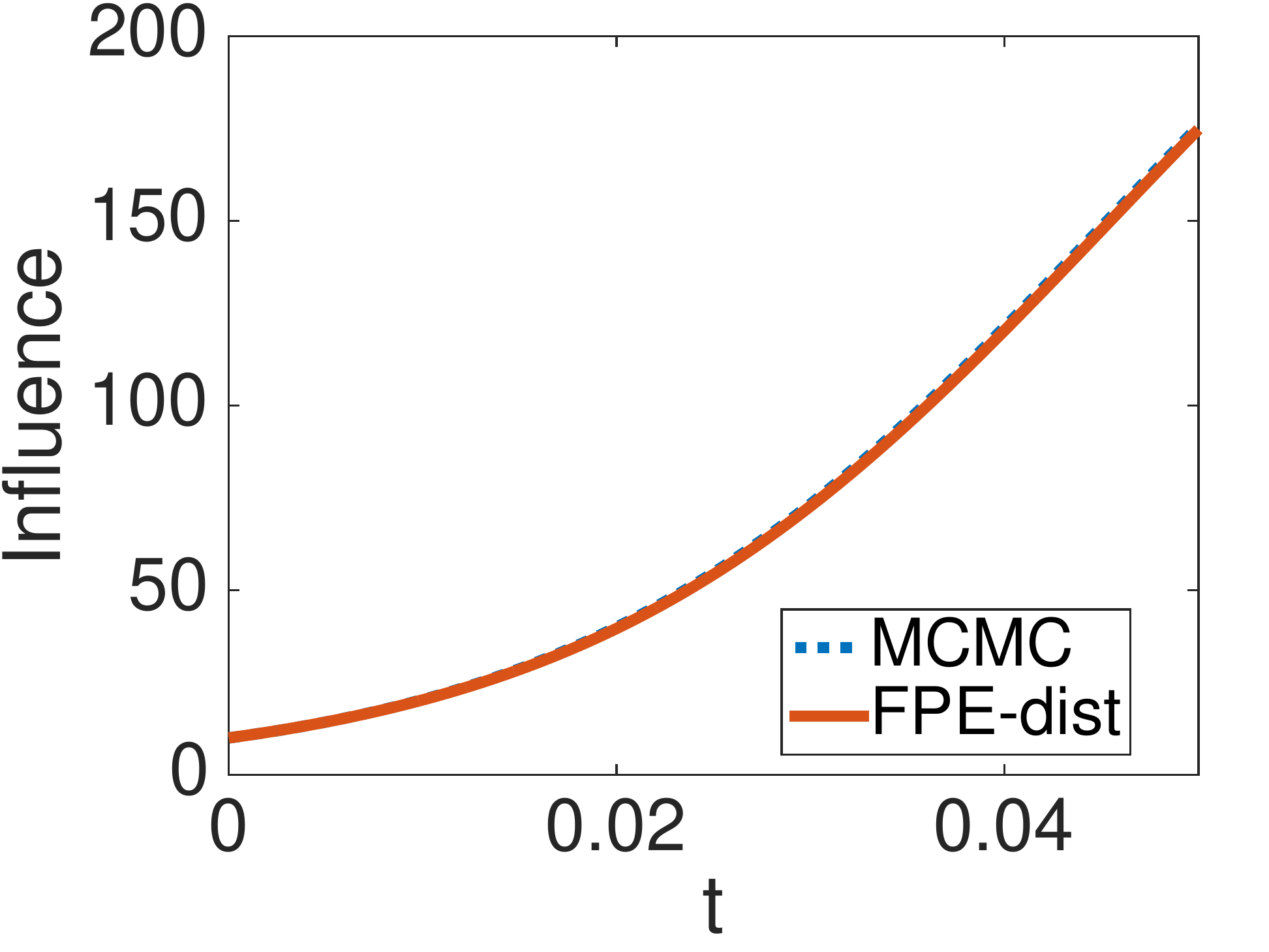}
\includegraphics[width=.24\textwidth]{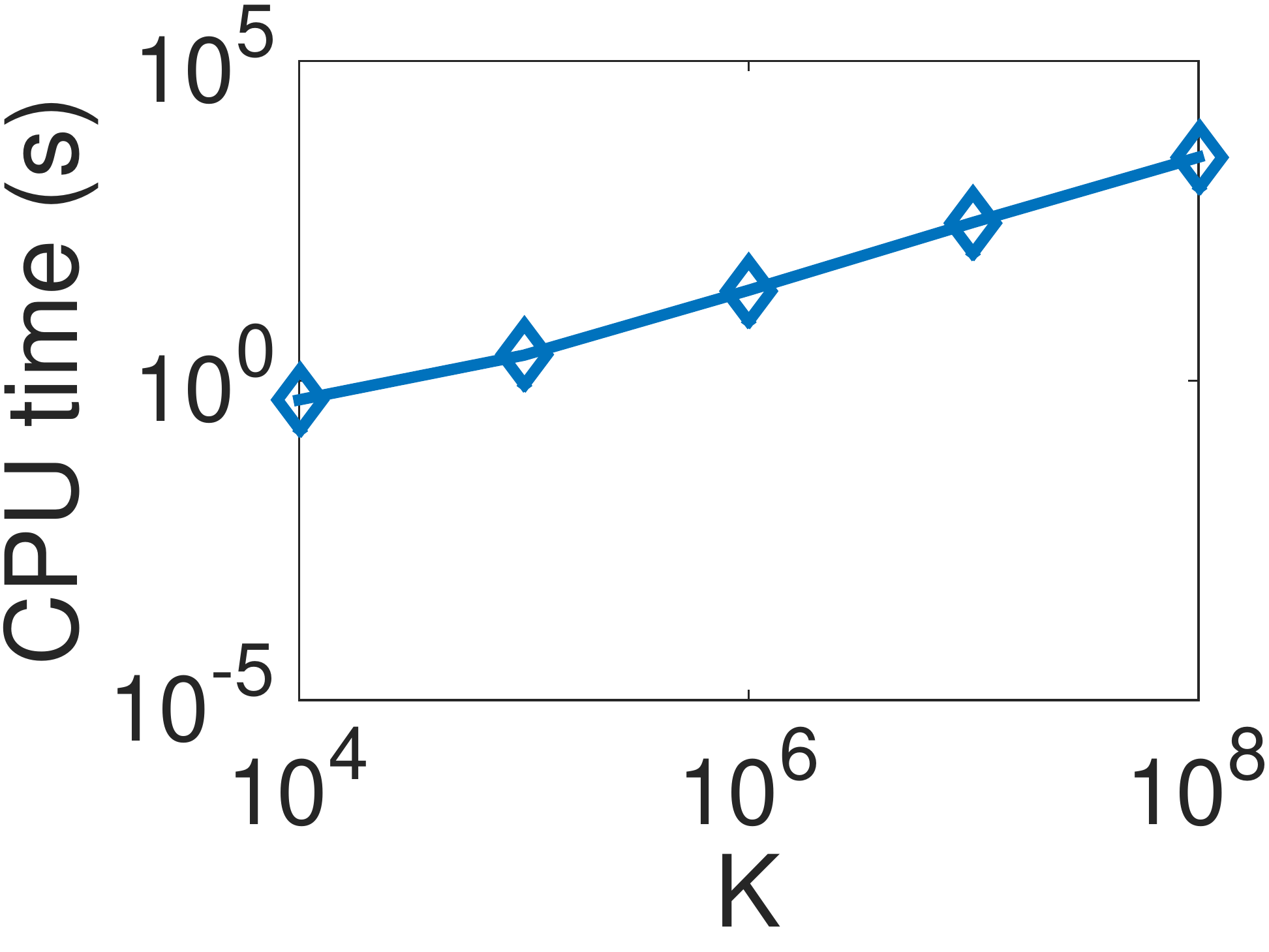}
\caption{\textbf{Top row}: $\qhat_k$ estimated in \texttt{FPE-dist}
and $q_k(t)$ shown by ground truth (MCMC simulation) for $k=10,70,130,190$
using a dense Erd\H{o}s-R\'{e}nyi's network of size $K=300$ and average degree $150$.
\textbf{Bottom row} from left to right: $|\rhohat_k(t)-\rho_k(t)|/\rho_k(t)$;
$|\influhat(t)-\influ(t)|/\influ(t)$ and plot of $\influ(t)$ and $\influhat(t)$ for this $K=300$ network; and 
CPU time (in seconds) of \texttt{FPE-dist} using Runge-Kutta 4th order ODE solver
on networks with $K$ range from $10^4$ to $10^8$.}
\label{fig:cputime}
\end{figure}

To show the great potential of the proposed method for influence prediction on 
large sized networks, we plot the CPU time (in seconds) for solving the
Fokker-Planck equation \eqref{eq:FPEform} numerically using MATLAB with single core
computation on a regular desktop computer (Intel Core 3.4GHz CPU)
in the bottom rightmost panel of Fig.\,\ref{fig:cputime}.
In contrast, most state-of-the-arts learning-based approaches
suffer drastic increase of computational cost for larger or denser networks
due to the significantly amplified number of simulations required to achieve acceptable level of accuracy \cite{Du:2013a}.
On the other hand, the proposed method possesses low computation
complexity and is scalable for large and dense networks. 

\section{Concluding remarks}
We consider the important influence (expected number of activated nodes) 
prediction problem on general heterogeneous
networks. The problem is significantly different from those in classical mathematical
epidemics theory where individual contact network is not considered nor
those in statistical physics where networks are statistically homogeneous 
and nodes are not exactly distinguishable. 
In our problem, the influence depends on the following factors which all 
play critical roles in computations:
the structure of network (directed graph) $G=(V,E)$, 
the activation rates $\{\alpha_{ij}\}$ between every pair of nodes $i$ and $j$
(and self-activation rates $\{\beta_i\}$ and recovery rates $\{\gamma_i\}$
if applicable), and the source set $S$. In this paper, we proposed a novel approach
by calculating the probability $\rho_k(t)$ ($k$ nodes are activated
at time $t$) for all influence sizes $k$ to obtain influence $\influ(t)=\sum_k k\rho_k(t)$. 
To this end, we establish 
the Fokker-Planck equation as a system of deterministic differential equations
that governs the dynamical evolution of $\{\rho_k(t)\}$.
We provide a few instances for estimating the coefficients in the Fokker-Planck equations,
and establish the relation between coefficient estimation error and the final influence
prediction error, which apply to all types of propagation models on general networks.
We conducted a number of numerical experiments which justify the very promising performance
of the proposed approach in terms of accuracy, efficiency and robustness.
%We establish a system of ordinary differential equations which governs the probability mass functions on the
%state graph where each node lumps a number of activation states of $G$.
%Such a system is constructed on a discretee space (the state graph),
%and can be thought of as an analogue of the Fokker-Planck equations for stochastic processes in
%continuous spaces. We provide several methods to estimate the system parameters
%in terms of the given source set $S$ and rates $\{\alpha_{i,j}:(i,j)\in E\}$ etc. 
%The solution of the system can be readily used
%to compute the influence of any given source set.
%Numerical experiments show that the proposed methods are very efficient and accurate 
%to solve the influence prediction problem, which is considered very challenging computationally in the literature.

Our novel approach also gives rise to a number of new research problems. For example:
How to approximate the transition rates $q_k$ and $r_k$ accurately for general
propagation models (e.g., activation time is not exponentially distributed and
hence the propagation is not Markov)? How to apply the Fokker-Planck equation approach
to influence prediction when only propagation cascade data is available
(i.e., only the activation times and identities are observed during a number of 
propagations but not
the actual network $G=(V,E)$ and/or activation parameters in practice)?
These problems are important from both of theoretical and practical 
points of view, and we plan to investigate them in our future research.

\section*{Appendix}
\begin{proposition}\label{prop:minT}
Let $T_1,T_2,\dots,T_n$ be independent random variables and 
$T_j \sim \exp(\alpha_i)$ for all $j$, then the probability that
$T_i=\min_{1\leq j\leq n}T_j$ is
$\alpha_i/(\sum_{j=1}^{n}\alpha_{j})$,
and the minimum $\min_{1\leq j\leq n}T_j \sim \exp(\sum_{i=1}^n\alpha_i)$.
\end{proposition}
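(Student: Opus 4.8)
The plan is to treat the two claims separately, each by a short direct computation that exploits independence together with the explicit exponential survival functions $\pr(T_j > \tau) = e^{-\alpha_j \tau}$.

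First I would pin down the distribution of the minimum. Setting $M := \min_{1\le j\le n} T_j$, the cleanest route is through the survival function rather than the density: by independence, $\pr(M > \tau) = \prod_{j=1}^n \pr(T_j > \tau) = \prod_{j=1}^n e^{-\alpha_j \tau} = e^{-(\sum_j \alpha_j)\tau}$ for every $\tau \ge 0$. Since this is exactly the survival function of an exponential variable with rate $\sum_{j=1}^n \alpha_j$, one concludes $M \sim \exp(\sum_{j=1}^n \alpha_j)$, which is the second assertion.

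For the selection probability I would compute $\pr(T_i = \min_j T_j)$ by conditioning on the value of $T_i$. Because each $T_j$ is continuous, ties occur with probability zero, so this event agrees up to a null set with $\{T_i < T_j \text{ for all } j \ne i\}$. Conditioning on $T_i = s$ and invoking independence of the remaining variables gives $\pr(T_i = \min_j T_j) = \int_0^\infty \alpha_i e^{-\alpha_i s}\prod_{j\ne i}\pr(T_j > s)\,\diff s = \int_0^\infty \alpha_i e^{-\alpha_i s}\prod_{j\ne i} e^{-\alpha_j s}\,\diff s$. The product collapses to $e^{-(\sum_{j\ne i}\alpha_j)s}$, so the integrand becomes $\alpha_i e^{-(\sum_j \alpha_j)s}$, and evaluating the elementary integral yields $\alpha_i/\sum_{j=1}^n \alpha_j$, as claimed.

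I do not expect a genuine obstacle here, as this is a standard property of exponential races; the single point meriting a line of care is the treatment of ties, which I would dispatch by noting $\pr(T_i = T_j) = 0$ for $i \ne j$ since the joint law is absolutely continuous. As a sanity check, the selection probabilities sum to $\sum_{i=1}^n \alpha_i/\sum_j \alpha_j = 1$, consistent with the fact that almost surely exactly one index attains the minimum.
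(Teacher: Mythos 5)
Your proof is correct, and it is precisely the ``direct computation'' that the paper invokes: the paper's own proof of Proposition~\ref{prop:minT} omits all details, and your argument---the survival function $\pr(M>\tau)=e^{-(\sum_j\alpha_j)\tau}$ via independence for the minimum, and conditioning on $T_i=s$ with the tie set being null for the selection probability---is the standard computation being alluded to. Nothing is missing; the remark on ties and the sanity check that the selection probabilities sum to one are appropriate finishing touches.
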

\begin{proof}
The proof is by direct computation and hence details are omitted here.
\end{proof}

\begin{proposition}\label{prop:mixrate}
Let $T_i \sim \exp(\alpha_i)$
and $Y$ be a multinomial random variable such that
$\pr(Y=i) = p_i$ for $i=1,\dots,n$, then the probability density function
of $T_Y$ is $f_{T_Y}(t) = \sum_{i=1}^n p_i \alpha_i e^{-\alpha_it}$,%,\quad \mbox{for}\ t\geq 0
and the instantaneous hazard rate of point process associated to time $T_Y$ is
$\alpha_{T_Y}(t)=(\sum_{i=1}^n p_i \alpha_i e^{-\alpha_it})/(\sum_{i=1}^n p_i e^{-\alpha_it})$.
In particular, $\alpha_{T_Y}(0)=\sum_{i=1}^{n}p_i\alpha_i$.
\end{proposition}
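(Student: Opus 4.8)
The plan is to recognize $T_Y$ as a finite mixture of the exponential laws and then read off its density, survival function, and hazard rate directly by conditioning on the selector variable $Y$. Since $Y$ is independent of the $T_i$ and takes value $i$ with probability $p_i$, the law of total probability gives the density of $T_Y$ as $f_{T_Y}(t)=\sum_{i=1}^n \pr(Y=i)\,f_{T_i}(t)=\sum_{i=1}^n p_i\alpha_i e^{-\alpha_i t}$, using that $T_i\sim\exp(\alpha_i)$ has density $\alpha_i e^{-\alpha_i t}$ on $t\geq0$. This settles the first claim immediately.

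For the hazard rate I would first compute the survival function of $T_Y$ by the same conditioning: $\pr(T_Y>t)=\sum_{i=1}^n p_i\,\pr(T_i>t)=\sum_{i=1}^n p_i e^{-\alpha_i t}$, since each exponential survival is $\pr(T_i>t)=e^{-\alpha_i t}$. The instantaneous hazard (failure) rate of a nonnegative random variable is by definition the ratio of its density to its survival function, so $\alpha_{T_Y}(t)=f_{T_Y}(t)/\pr(T_Y>t)$, which yields the stated formula $\left(\sum_{i=1}^n p_i\alpha_i e^{-\alpha_i t}\right)\big/\left(\sum_{i=1}^n p_i e^{-\alpha_i t}\right)$. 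Finally, setting $t=0$ makes every exponential factor equal to one, and since $\sum_{i=1}^n p_i=1$ the denominator is one while the numerator collapses to $\sum_{i=1}^n p_i\alpha_i$, giving $\alpha_{T_Y}(0)=\sum_{i=1}^n p_i\alpha_i$.

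Each step is a routine computation, so there is no genuine analytic obstacle; the only point requiring care is conceptual. One must not conflate the hazard rate of the mixture with the mixture of the individual (constant) hazard rates $\alpha_i$: the correct expression is a time-dependent weighted average of the $\alpha_i$ with weights $p_i e^{-\alpha_i t}$, which reweight toward the smaller-rate components as $t$ grows and coincide with the naive average $\sum_i p_i\alpha_i$ only at $t=0$. Keeping the density-over-survival definition of the hazard rate explicit avoids this pitfall and makes the $t=0$ evaluation transparent, which is exactly the quantity $\alpha(U)+\beta(U^c)$ invoked at the initial instant in the proof of Theorem~\ref{thm:qrest}.
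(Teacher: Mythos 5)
Your proof is correct and follows essentially the same route as the paper: condition on $Y$ via the law of total probability to obtain the mixture survival function $\sum_i p_i e^{-\alpha_i t}$, and take the hazard rate as density over survival (the paper differentiates the survival function to get the density, whereas you write the density directly as a mixture, a negligible difference). Note that both you and the paper implicitly use the independence of $Y$ from the $T_i$, which the statement does not spell out, so you are on equal footing there.
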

\begin{proof}
We use the rule of total probability to obtain
\begin{equation}
\pr (T_Y\geq t)=\sum_{i=1}^n \pr (T_Y\geq t | Y=i) \pr(Y=i)
=\sum_{i=1}^n p_ie^{-\alpha_i t}.
\end{equation}
Hence the cumulative distribution function of $T_Y$ is
$F_{T_Y}(t)=1-\pr(T_Y\geq t)$ and 
probability density function is 
$f_{T_Y}(t)=F_{T_Y}'(t)=\sum_{i=1}^n p_i\alpha_ie^{-\alpha_i t}$.
The instantaneous hazard rate is then given by
$\alpha_{T_Y}(t)=f_{T_Y}(t)/\pr(T_Y\geq t)$.
\end{proof}

\bibliographystyle{abbrv}
%\bibliography{library}
\bibliography{/Users/xye/Dropbox/Documents/Library/library}

\end{document}